\def\cb{{\mathcal B}}
\def\cn{{\mathcal N}}
\def\bb{{\mathbb B}}
\def\bn{{\mathbb N}}
\def\br{{\mathbb R}}
\def\a{\alpha}
\def\b{\beta}
\def\d{\delta}  
\def\e{\eta}
\def\l{\lambda} 
\def\m{\mu}
\def\s{\sigma} 
\def\v{\varphi} \def\F{\Phi}
\def\w{\omega} \def\Om{\Omega}
\def\ab{\overline{a}} \def\bb{\overline{b}} \def\cb{\overline{c}}
\def\xd{x_{\downarrow}}
\def\Cbd{C_{b_{\downarrow}}}
\def\cbd{c_{b_{\downarrow}}}
\def\h{{\mathbf{h}}}
\newtheorem{thm}{Theorem}[section]
\newtheorem{lem}[thm]{Lemma}
\newtheorem{cor}[thm]{Corollary}
\newtheorem{defin}[thm]{Definition}
\newtheorem{rem}{Remark}[section]
\begin{document}
\title[Ground states and Phase Transition]{On Ground States and Phase Transition for $\lambda$-Model with the Competing Potts Interactions on Cayley Trees}

 \author{Farrukh Mukhamedov}
\address{Farrukh Mukhamedov\\
    ,
 Department of Mathematical Sciences, \& College of Science,\\
The United Arab Emirates University, Al Ain, Abu Dhabi,\\
15551, UAE} \email{{\tt far75m@yandex.ru}, {\tt
farrukh.m@uaeu.ac.ae}}

 \author{Chin Hee Pah}
\address{Chin Hee Pah\\
    ,
Department of Computational \& Theoretical Sciences\\
Faculty of Science, International Islamic University Malaysia\\
Kuantan, Pahang, Malaysia} \email{{\tt pahchinhee@gmail.com}}

\author{Hakim Jamil}

\address{Hakim Jamil\\
 Department of Computational \& Theoretical Sciences\\
Faculty of Science, International Islamic University Malaysia\\
Kuantan, Pahang, Malaysia} \email{{\tt m.hakimjamil@yahoo.com.my}}

\author{Muzaffar Rahmatullaev}

\address{Muzaffar Rahmatullaev\\
 Department of Mathematics, Namangan State University,\\
Namangan, Uzbekistan} \email{{\tt mrahmatullaev@rambler.ru }}

\maketitle

\begin{abstract}
In this paper, we consider the $\l$-model with nearest neighbor interactions and with competing Potts interactions on the Cayley tree of order-two.   We notice that if $\l$-function is taken as a Potts interaction function, then 
this model contains as a particular case of Potts model  with competing interactions on Cayley tree.  In this paper, we first describe all ground states of the model. 
We point out that the Potts model with considered interactions was
investigated only numerically, without rigorous (mathematical) proofs. One of
the main points of this paper is to propose a measure-theoretical approach for
the considered model in more general setting. Furthermore, we find certain conditions for the existence of Gibbs
measures corresponding to the model, which allowed to establish the existence
of the phase transition. 
\end{abstract}

\section{Introduction}

The main objective of statistical mechanics is to predict the relation between the observable macroscopic properties of the system given only the knowledge of the microscopic interactions between components. 
It can be explained by mathematical framework.It is known \cite{1} that the Gibbs measures are one of the central objects of equilibrium statistical mechanics. Also, one of the main problems of statistical physics is to describe all Gibbs measures corresponding to the given Hamiltonian \cite{2}. As is known, the phase diagram of Gibbs measures for a Hamiltonian is close to the
 phase diagram of isolated (stable) ground states of this Hamiltonian. At low temperatures, 
 a periodic ground state corresponds to a periodic Gibbs measure \cite{3,4}. The problem naturally lead to arises on description of periodic ground states. 

 A simplest model in statistical mechanics is the Ising model which has wide
theoretical interest and practical applications. There are several papers (see \cite{1,8} for review)
which are devoted to the description of this set for the Ising model on a Cayley tree.
However, a complete result about all Gibbs measures even for the Ising model is lacking.
Later on in \cite{V}  such an Ising model was considered with next-neatest neighbor
interactions on the Cayley tree for which its phase diagram was described.
On the other hand,  the q-state Potts model is one of the most studied models in
statistical mechanics due to its wide theoretical interest and practical
applications \cite{NS,2,DGM}. The Potts model \cite{9} was introduced as a
generalization of the Ising model to more than two components
and encompasses a number of problems in statistical physics (see,
e.g. \cite{10}). The model is structured richly enough to illustrate
almost every conceivable nuance of the subject. Furthermore, the Potts models became one of the important models in statistical mechanics.  These models describe a special class of statistical mechanics systems, which are quite simply defined.

The Potts model with competing interactions on the Cayley tree  is more complex and has rich structure of ground states \cite{5,6,7} (see also \cite{8}). Nevertheless, their structure is sufficiently rich to describe almost every conceivable nuance of an object of investigation. In \cite{GMP}  a phase diagram of the
three-state Potts model with competing nearest neighbor and
next nearest neighbor interactions on a Cayley tree has been obtained (numerically). On the other hand, the structure of the Gibbs measures of the Potts models was investigated in \cite{11,12,13}. 
It is natural to consider more complicated models than the Potts one, so called $\l$-model \cite{14,15}. In \cite{16,17}  we have investigated the set of ground states for $\l$-model (with nearest neighbor interactions) on Cayley tree.  Furthermore,
the phase transition has been also established for the mentioned model \cite{18}.  

To the best knowledge of the authors, q-state Potts model with
competing interactions on the Cayley tree is not well studied from the measure-theoretical point of view. Some particular cases have been carried out when the competing interactions are located in the same level of the tree \cite{5,6,7}.   
Therefore, one of the main aims of the present paper is to develop
a measure-theoretic approach (i.e. Gibbs measure formalism) to rigorously establish
the phase transition for the $\l$-model with competing Potts interactions on the Cayley tree. We notice that until now, many researchers have investigated Gibbs measures corresponding to the Ising types of 
models \cite{MAK}. The aim of this paper is to
propose rigorously the investigation of Gibbs measures for the$\l$-model with competing Potts interactions which include as a particular case of Potts model with competing interactions. 

The paper is organized as follows.  In section 2, we provide necessary notations and
define the $\l$-model with competing Potts interactions on Cayley tree of order two. In section 3, we describe ground states of the considered model. 
In section 4, using a rigorous
measure-theoretical approach, we find certain conditions for the existence of Gibbs measures
corresponding to the model on the Cayley tree. To describe
the Gibbs measure, we obtain a system of functional equations (which is extremely
difficult to solve). Nevertheless, we are able to succeed in obtaining explicit
solutions by making reasonable assumptions, for the existence of translational invariant
Gibbs measures which allows us to establish the existence of the phase transition. We point out that when the competing Potts interaction vanishes, then the model reduced to the $\l$-model which was investigated in \cite{11,18}.

\section{Preliminaries}
Let $\tau^k = (V,L)$ be a Cayley tree of order
$k$, i.e, an infinite tree such that exactly $k+1$ edges are incident to each vertex. Here $V$ is
the set of vertices and $L$ is the set of edges of $\tau^k$.

Let $G_k$ denote the free product of $k+1$ cyclic groups $\{e, a_i\}$ of order 2 with generators $a_1, a_2,\dots, a_{k+1}$,
i.e., let $a^2_i=e$ (see \cite{13}).

There exists a one-to-one correspondence between the set $V$ of vertices of the Cayley tree of order $k$ and the group $G_k$ \cite{8}.

For the sake of completeness, let us establish this correspondence (see \cite{8} for details). We choose an arbitrary vertex $x_0 \in V  $and associate it with the identity element $e$ of the group $G_k$. Since we may assume that the graph under consideration is planar, we associate each neighbor of $x_0$ (i.e., $e$) with a single generator $a_i, i=1, 2,\dots, k + 1$, where
the order corresponds to the positive direction, see Figure \ref{cayley}.
\begin{figure}[h!]
    \begin{center}
        \includegraphics[width=12cm]{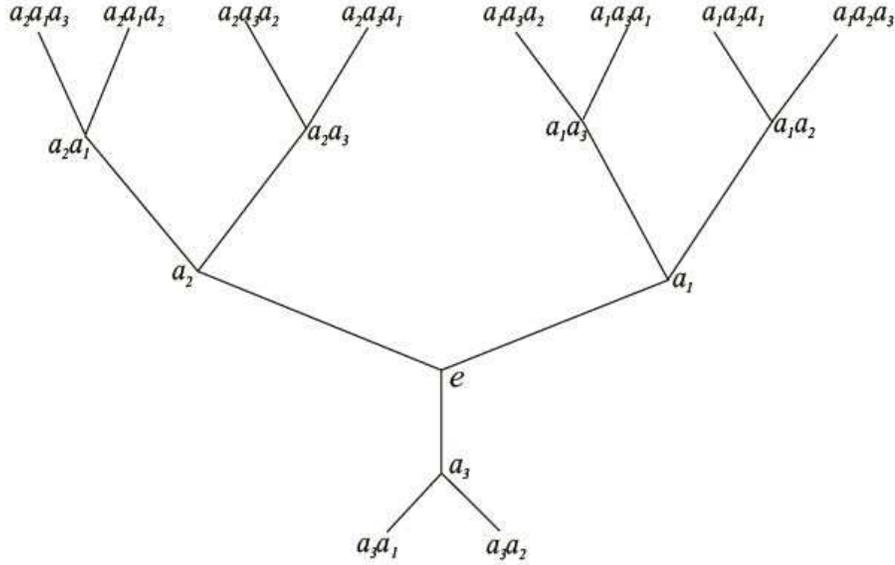}
    \end{center}
    \caption{The Cayley tree $\tau^2$ and elements of the group representation of vertices} \label{cayley}
\end{figure}

For every neighbor of $a_i$, we introduce words of the form $a_ia_j$ . Since one of the neighbors of $a_i$ is $e$, we put $a_ia_i = e$. The remaining neighbors of $a_i$ are labeled according to the above order. For every neighbor of $a_ia_j$ , we introduce words of length 3 in a similar way. Since one of the neighbors of $a_ia_j$ is $a_i$, we put $a_ia_ja_j = a_i$. The remaining neighbors of $a_ia_j$ are labeled by words of the form $a_ia_ja_l$, where $i, j, l = 1, 2, . . . , k + 1$, according to the above procedure. This agrees with the previous stage
because $a_ia_ja_j = a_ia^2_j= a_i$. Continuing this process, we obtain a one-to-one correspondence between the vertex set of the Cayley tree $\tau^k$ and the group $G_k$.

The representation constructed above is said to be $right$ because,
for all adjacent vertices $x$ and $y$ and the corresponding elements
$g,h \in G_k,$ we have either $g = ha_i$ or $h=ga_j$ for suitable
$i$ and $j$. The definition of the $left$ representation is similar.

For the group $G_k$ (or the corresponding Cayley tree), we consider the left (right) shifts. For $g \in G_k$,
we put

\begin{equation*}
    T_g(h)=gh \ (T_g(h)=hg) \ for \ all \ h \in G^k
\end{equation*}
The group of all left (right) shifts on $G_k$ is isomorphic to the group $G_k$.

Each transformation $S$ on the group $G_k$ induces a transformation $S$ on the vertex set $V$ of the Cayley
tree  $\tau^k$. In the sequel, we identify $V$ with $G_k$.

\begin{thm}
    The group of left (right) shifts on the right (left) representation of the Cayley tree is the group of translations.
\end{thm}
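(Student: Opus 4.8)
The plan is to unwind the definitions and verify the claimed invariance directly. First I would fix notation: let $\rho\colon G_k\to V$ be the \emph{right} representation of the Cayley tree constructed above, so that adjacent vertices $x=\rho(g)$, $y=\rho(h)$ satisfy $g=ha_i$ or $h=ga_j$ for suitable generators. The assertion to prove is that for every $g\in G_k$, the left shift $T_g(h)=gh$ induces, via $\rho$, a graph automorphism of $\tau^k$ (a ``translation'', i.e. an adjacency-preserving bijection), and that the collection of all such maps is precisely the group of translations; the symmetric statement holds with ``left'' and ``right'' interchanged and $T_g(h)=hg$.

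The key steps, in order, are: (1) \textbf{Bijectivity.} $T_g$ is a bijection of $G_k$ because left multiplication in a group is invertible, with inverse $T_{g^{-1}}$; transporting through $\rho$ gives a bijection of $V$. (2) \textbf{Adjacency preservation.} Suppose $x=\rho(h_1)$ and $y=\rho(h_2)$ are adjacent, so without loss of generality $h_2=h_1a_i$ for some generator $a_i$. Then $T_g(h_2)=gh_2=gh_1a_i=(T_g(h_1))a_i$, so the images $\rho(gh_1)$ and $\rho(gh_2)$ differ by right multiplication by a single generator, hence are adjacent in the right representation. The same computation applied to $T_{g^{-1}}$ shows the converse, so $T_g$ carries the edge set bijectively onto itself and is a genuine automorphism. (3) \textbf{Group structure and identification.} Since $T_{g_1}\circ T_{g_2}=T_{g_1g_2}$ and $T_e=\mathrm{id}$, the set $\{T_g : g\in G_k\}$ is a subgroup of $\mathrm{Aut}(\tau^k)$ isomorphic to $G_k$ (this was already noted in the excerpt). (4) \textbf{Exhaustion.} To see this subgroup is \emph{all} translations, invoke the standard fact that the translation group of $\tau^k$ acts simply transitively on $V$ (a translation is determined by the image of the root $x_0=\rho(e)$, and for each target vertex $\rho(g)$ the map $T_g$ realizes it); hence any translation coincides with some $T_g$.

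The main obstacle is step (2): one must check that left multiplication really does interact with the \emph{right} representation so as to preserve the ``differ by a generator'' relation \emph{symmetrically} — that is, that $h_2 = h_1 a_i$ for a generator $a_i$ is equivalent to $h_1 = h_2 a_i$ (using $a_i^2=e$), and that the induced map on edges is surjective, not merely injective. This is exactly where the choice of \emph{right} representation against \emph{left} shift is essential: had we used left shifts on the left representation, conjugation-type obstructions would appear. The remaining subtlety is making precise what ``the group of translations'' means on a tree with no distinguished algebraic structure a priori; I would take it to mean the subgroup of $\mathrm{Aut}(\tau^k)$ acting freely and transitively on $V$, after which step (4) is a short uniqueness argument. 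Everything else is a routine unwinding of the correspondence between $G_k$ and $V$ established earlier in this section.
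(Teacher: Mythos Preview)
The paper states this theorem without proof; it is quoted as a standard fact from the literature (see the reference to \cite{8}), so there is no argument in the paper to compare against. I will therefore assess your proposal on its own merits.

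Your steps (1)--(3) are correct and cleanly establish that each left shift $T_g$ is a graph automorphism of $\tau^k$ and that $g\mapsto T_g$ is an injective homomorphism $G_k\hookrightarrow\mathrm{Aut}(\tau^k)$. The observation that $a_i^2=e$ makes the adjacency relation symmetric is exactly the right point.

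Step (4), however, has a genuine gap. The paper \emph{defines} ``the group of translations'' to be the full automorphism group of $\tau^k$ as a graph, and with that definition the exhaustion claim is simply false: $\mathrm{Aut}(\tau^k)$ does \emph{not} act simply transitively on $V$. The stabilizer of any vertex is enormous (an inverse limit of iterated wreath products of $S_{k+1}$), so $\mathrm{Aut}(\tau^k)$ is uncountable while $G_k$ is countable. Your proposed fix --- redeclaring ``translations'' to mean a subgroup acting freely and transitively --- does not rescue the argument, both because it contradicts the paper's explicit definition and because such a subgroup is not unique, making the definition circular.

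What is actually true, and what the cited literature proves, is that the image of $G_k$ under left shifts is a simply transitive subgroup of $\mathrm{Aut}(\tau^k)$; the paper's phrasing (and its definition of ``translations'') is imprecise. Your write-up would be correct if you replaced step (4) with the statement that $\{T_g\}$ acts simply transitively on $V$ --- which follows immediately from $T_g(e)=g$ and injectivity --- and dropped the claim that this exhausts all graph automorphisms.
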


By the group of translations we mean the automorphism group of the
Cayley tree regarded as a graph. Recall that a mapping $\psi$ on the vertex set of a graph G
is called an automorphism of G if $\psi$ preserves the adjacency
relation, i.e., the images $\psi(u)$  and $\psi(v)$ of vertices $u$
and $v$ are adjacent if and only if $u$ and $v$ are adjacent.

For an arbitrary vertex $x_0 \in V $, we put

\begin{equation*}
    W_{n}=\left\{ x\in V\mid d(x^0,x)=n\right\}, \ \
    V_n=\bigcup\limits_{m=0}^{n}W_m , \ \
    L_{n}=\left\{
    l=<x,y>\in L\mid x,y\in V_{n}\right\}.
\end{equation*}
where $d(x, y)$ is the distance between $x$ and $y$ in the Cayley tree, i.e., the number of edges of the path
between $x$ and $y$.

For each $x \in G_k$, let $S(x)$ denote the set of immediate successor of $x$, i.e., if $x \in W_n$ then

$$
S(x)=\left\{ y\in W_{n+1}:d(x,y)=1\right\}.
$$

For each $x \in G_k$, let $S_1(x)$ denote the set of all neighbors of $x$, i.e., $S_1(x)=\{y \in G_k:<x,y> \in \ L \}$.
The set $S_1(x) \setminus S(x)$ is a singleton. Let $\xd$ denote the (unique) element of this set.

Assume that spin takes its values in the set $\F=\{1, 2,\dots, q\}.$
By a configuration $\s$ on $V$ we mean a function taking $\sigma:x
\in V\to \s(x) \in \F.$ The set of all configurations coincides with
the set $\Omega = \F^V$.

Consider the quotient group $G_k/G^*_ k = \{H_1,\dots,H_r\}$,
where $G^*_k$ is a normal subgroup of index $r$ with $r \geq 1$.

\begin{defin}
    A configuration $\s(x)$ is said to be $G^*_k$-periodic if $\s(x)
    =\s_i$ for all $x \in G_k$ with $x \in H_i$. A $G_k$-periodic
    configuration is said to be translation invariant.
\end{defin}
By \textit{period} of a periodic configuration we mean the index of
the corresponding normal subgroup.


\begin{defin}
    The vertices $x$ and $y$ are called next-nearest-neighbor which is denoted by ${>x,y<}$, if there exists a vertex $z \in V$ such that $x,z$ and $y,z$ are nearest-neihbors.
\end{defin}

Let spin variables $\sigma(x), x\in V,$ take values $\{1, 2, 3\}$.
The  $\lambda$-Model with competing Potts Interactions is
defined by the following Hamiltonian:
\begin{equation}\label{hm}
H(\sigma)=\sum_{<x,y>}\lambda(\sigma(x),\sigma(y))+J\sum_{{>x,y<}}\delta_{\sigma(x),\sigma(y)},
\end{equation}
where $J\in {R}$ and  $\delta$ is the Kronecker  symbol and
\begin{equation}\label{cond}
\l(i,j)=\left\{
\begin{array}{lll}
\ab,& \ \textrm{if} \ \ &|i-j|=2,\\
\bb,&\ \textrm{if} \ \ &|i-j|=1,\\
\cb,&\ \textrm{if} \ \ &i=j,
\end{array}\right.
\end{equation}
where $\ab,\bb,\cb\in \mathbb{R}$  some given numbers. 

We notice if $\l(i,j)=J_0\d_{i,j}$, (where $J_0$ is some constant. In this setting, we have $\ab=\bb=0$, $\cb=J_0$) then the model reduces to the Potts model with competing interactions which was numerically investigated in \cite{GMP}.
Moreover, if one takes $\l(i,j)=J_0|i-j|$, then the model reduces to Solid-on-Solid (SOS) model with competing Potts interactions. Some analogue of this model has been recently studied in \cite{Ras}.

\section{Ground States}

In this section, we are going to describe  ground state of the  $\lambda$-Model
with competing Potts interactions on a Cayley tree of order two. 

For a pair
of configurations $\s$ and $\v$ which coincide almost everywhere,
i.e., everywhere except finitely many points, we consider a
relative Hamiltonian $H(\s,\v)$ determining the energy differences
of the two configurations $\s$ and $\v$:

\begin{eqnarray}\label{eq12}
H(\s,\v)=\sum_{\substack{<x,y>\\ x,y \in V}}(\l(\s(x),\s(y))-\l(\v(x),\v(y)))+J\sum_{\substack{>x,y<\\ x,y \in V}}(\d_{\s(x),\s(y)}-\d_{\v(x),\v(y)})
\end{eqnarray}

Let $M$ be the set of unit balls with vertices in $V$, i.e. $M=\{x\in S_1(x): \forall x\in V$\}. The restriction of a configuration $\s$ to the ball $b\in M$ is called 
{\it bounded configuration} $\s_b .$

We define the energy of a configuration $\s_b$ on $b$ as follows:
\[U(\s_b)=\dfrac{1}{2}\sum_{\substack{<x,y>\\ x,y \in V}}\l(\s(x),\s(y))+J\sum_{\substack{>x,y<\\ x,y \in V}}\d_{\s(x),\s(y)}.\]
From \eqref{eq12}, we get the following lemma.

We shall say that two bounded configurations $\s_b$ and $\s'_{b'}$
belong to the same class if $U(\s_b)=U(\s'_{b'})$ and they are denoted by 
$\s'_{b'}\sim \s_b. $

\begin{lem}
    Relative Hamiltonian \eqref{eq12} has the form

    \[H(\s,\v)=\sum_{b \in M}(U(\s_b)-U(\v_b)).\]
\end{lem}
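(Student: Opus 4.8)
The plan is to expand the right-hand side, rearrange it into a sum over edges and next-nearest-neighbor pairs, and then match the result with \eqref{eq12} by counting how many unit balls contain a given edge or pair.

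First I would fix the finite set $\Lambda\subset V$ of vertices on which $\s$ and $\v$ differ, and note that the sum $\sum_{b\in M}(U(\s_b)-U(\v_b))$ is in fact finite: if a ball $b\in M$ contains no vertex of $\Lambda$, then $\s_b=\v_b$, hence $U(\s_b)=U(\v_b)$ and that term vanishes. Writing $M_\Lambda$ for the finite family of balls meeting $\Lambda$, we have $\sum_{b\in M}(U(\s_b)-U(\v_b))=\sum_{b\in M_\Lambda}(U(\s_b)-U(\v_b))$. This is what makes the identity meaningful, since the ``full'' objects $H(\s)$ and $\sum_b U(\s_b)$ are only formal series; everything below is carried out at the level of these finite differences.

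Next I would substitute the definition of $U(\s_b)$ and interchange the (now finite) order of summation, so that each edge $\langle x,y\rangle$, respectively each next-nearest pair $>x,y<$, gets weighted by the number of balls of $M_\Lambda$ that contain it. Here the tree structure enters: a unit ball centred at $z$ is $\{z\}\cup S_1(z)$, so every edge inside it has $z$ as one endpoint, whence the edge $\langle x,y\rangle$ lies in precisely the two balls centred at $x$ and at $y$; and the pair $>x,y<$ lies in precisely one ball, the one centred at the vertex $z$ with $\langle x,z\rangle,\langle z,y\rangle\in L$, which is unique because the Cayley tree is acyclic. Splitting edges and pairs according to whether they meet $\Lambda$: those disjoint from $\Lambda$ have $\s=\v$ at both endpoints, so $\l(\s(x),\s(y))=\l(\v(x),\v(y))$ and $\d_{\s(x),\s(y)}=\d_{\v(x),\v(y)}$, and since $M_\Lambda$ does not depend on the configuration they contribute $0$ to the difference; those meeting $\Lambda$ have both (resp. the unique) containing ball lying in $M_\Lambda$, hence weight $2$ (resp. $1$). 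The weight $2$ cancels the factor $\tfrac12$ in the definition of $U$, and the weight $1$ matches the absence of a $\tfrac12$ on the $J$-term. Collecting terms yields exactly the two sums of \eqref{eq12} restricted to edges and pairs meeting $\Lambda$; adjoining the remaining (zero) terms extends the sums to all $x,y\in V$ and gives \eqref{eq12}.

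The calculations themselves are routine; the only point requiring care is the bookkeeping with the a priori divergent series, which is handled as above by working throughout with the finitely supported difference $U(\s_b)-U(\v_b)$. The one genuinely structural ingredient is the count ``each edge lies in two unit balls, each next-nearest pair in exactly one'', the latter using that the Cayley tree has no cycles; this is precisely what pins down the normalisations in the definition of $U$.
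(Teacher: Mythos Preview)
Your argument is correct and is exactly the standard verification one would give; the paper itself provides no proof at all, merely asserting that the lemma follows from \eqref{eq12} and the definition of $U(\s_b)$. Your double-counting argument (each edge lies in two unit balls, each next-nearest pair in one) together with the finite-support bookkeeping is precisely what underlies that assertion, so there is no methodological difference to discuss.
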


For any configuration $\s_b$, we have
\[U(\s_b)\in\{U_1,U_2,U_3,U_4,U_5,U_6,U_7,U_8,U_9,U_{10},U_{11},U_{12}\},\]

where
\begin{equation}\label{U_n}
\begin{array}{lllll}
U_1=3\cb/2+3J,&U_2=(2\cb+\bb)/2+J,\\
U_3=(2\cb+\ab)/2+J,&U_4=(2\bb+\ab)/2+J,\\
U_5=(2\ab+\cb)/2+J,&U_6=3\ab/2+3J,\\
U_7=3\bb/2+J,&U_8=(2\bb+\cb)/2,\\
U_9=(2\ab+\bb)/2+J,&U_{10}=(\ab+\bb+\cb)/2\\
U_{11}=3\bb/2+3J,&U_{12}=(2\bb+\cb)/2+J.
\end{array}
\end{equation}

\begin{defin}\label{gs}
    A configuration $\v$ is called a ground state of the relative Hamiltonian H if
    \begin{eqnarray}\label{eq14}
      U(\v_b)=\min\{U_1,U_2,U_3,U_4,U_5,U_6,U_7,U_8,U_9,U_{10},U_{11},U_{12}\},
    \end{eqnarray}
    for any $b \in M.$

If a ground state is a periodic configuration then we call it a periodic ground state.
\end{defin}

By denoting
\begin{eqnarray}\label{eq15}
A_m=\{(\ab,\bb,\cb,J)\in \mathbb{R}^4|\  U_m=\min_{1 \leq k \leq 12}\{U_k\}\},
\end{eqnarray}
from \eqref{U_n}, we easily obtain
\begin{align*}
A_1=&\left\{(\ab,\bb,\cb,J)\in \mathbb{R}^4| \ \ab\geq\cb, \bb \geq \cb,J \leq\min \left\{\frac{\ab-\cb}{4}, \frac{\bb-\cb}{4},\frac{\ab+\bb-2\cb}{6}\right\}\right\}\\
A_2=&\left\{(\ab,\bb,\cb,J)\in \mathbb{R}^4| \ \ab\geq\bb\geq\cb, \frac{\bb-\cb}{4} \leq J \leq\frac{\bb-\cb}{2}\right\},\\
A_3=&\left\{(\ab,\bb,\cb,J)\in \mathbb{R}^4| \ \bb\geq \ab\geq \cb, \frac{\ab-\cb}{4} \leq J \leq\frac{\bb-\cb}{2} \right\},\\
A_4=&\left\{(\ab,\bb,\cb,J)\in \mathbb{R}^4| \ \ab=\bb, \cb \geq \ab, 0\leq J\leq \frac{\cb-\ab}{2}\right\},\\
A_5=&\left\{(\ab,\bb,\cb,J)\in \mathbb{R}^4| \ \bb\geq \cb\geq \ab, \frac{\cb-\ab}{4} \leq J \leq\frac{\bb-\ab}{2}\right\},\\
A_6=&\left\{(\ab,\bb,\cb,J)\in \mathbb{R}^4| \ \cb\geq\ab, \bb \geq \ab,J \leq\min \left\{\frac{\cb-\ab}{4}, \frac{\bb-\ab}{4},\frac{\bb+\cb-2\ab}{6}\right\}\right\},\\
A_7=&\left\{(\ab,\bb,\cb,J)\in \mathbb{R}^4| \ \ab\geq \bb,\cb\geq \bb, 0\leq J\leq \frac{\cb-\bb}{2}\right\},\\
A_8=&\left\{(\ab,\bb,\cb,J)\in \mathbb{R}^4| \ \ab\geq \bb, 0\leq J, |\bb-\cb|\geq 2J, \cb-\ab\leq 2J\right\},\\
A_9=&\left\{(\ab,\bb,\cb,J)\in \mathbb{R}^4| \ \cb\geq \bb\geq \ab, \frac{\bb-\ab}{4} \leq J \leq\frac{\cb-\ab}{2}\right\},\\
A_{10}=&\big\{(\ab,\bb,\cb,J)\in \mathbb{R}^4| \ 0\leq \bb-\ab\leq 2J, |\ab-\cb|\leq 2J, |\bb-\cb|\leq 2J\big\},\\
A_{11}=&\big\{(\ab,\bb,\cb,J)\in \mathbb{R}^4| \ \cb\geq \bb, \ab\geq\bb, J\leq 0\big\},\\
A_{12}=&\big\{(\ab,\bb,\cb,J)\in \mathbb{R}^4| \ \cb = \bb, \ab\geq\bb, J= 0\big\}.
\end{align*}

To construct ground states, let us denote for a given ball $b$ a configuration on it as follows: $x_b,y_b,c_b, \cbd \in (\ab,\bb,\cb)$:

        \begin{figure}[h!]
            \begin{center}
                \includegraphics[width=3.5cm]{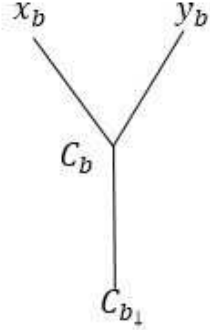}
            \end{center}
            \caption{A ball,b} \label{ball}
        \end{figure}



Let us introduce some notations. We put
$$C_i=\{\s_b \in \Om_b: \
U(\s_b)=U_i \}, \ \ \ i=\overline{1,10}$$ and $B^{(i)}=|\{x \in
S_1(k):\ \v_b(x)=i\}|$ for $i=\overline{1,3}$.

Let $A\subset \{1,2,...,k+1\}$, $H_A=\{x\in G_k: \sum_{j\in
    A}w_j(x)-$even$\},$ where $w_j(x)$-is the number of letters $a_j$
in the word $x.$

It is obvious, that $H_A$ is a normal subgroup of index two. Let $G_k/H_A=\{H_A,G_k\setminus H_A\}$ be the quotient
group. We set $H_0=H_A, H_1=G_k\setminus H_A$.


\begin{thm}
    Let $(\ab,\bb,\cb) \in A_1$, then there are only three ground states which are translation-invariant.
\end{thm}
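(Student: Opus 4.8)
The plan is to analyze the set $A_1$ from \eqref{eq15} and determine exactly which bounded configurations $\s_b$ achieve the minimum $U_1$ for every ball $b\in M$, and then to check which global configurations built from such bounded configurations are consistent and periodic. By Definition \ref{gs}, a ground state is a configuration $\v$ with $U(\v_b)=\min_{1\le k\le 12} U_k$ for all $b$; on the region $A_1$ this minimum equals $U_1=3\Cb/2+3J$. So the first step is to identify $C_1=\{\s_b\in\Om_b:\ U(\s_b)=U_1\}$. Recalling that $b$ is a unit ball — a central vertex $k$ together with its three neighbors $S_1(k)$, with configuration values $x_b,y_b,\Cb,\cbd$ — the value $U(\s_b)$ is $\frac12$ times the sum of the three edge contributions $\l(\cdot,\cdot)$ plus $J$ times the number of coincident next-nearest-neighbor pairs among the three outer vertices. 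To get the minimal value $3\Cb/2+3J$ one needs all three edges to contribute $\Cb$ (i.e.\ the central spin equals each neighbor's spin) and all three NNN pairs to coincide; hence $\s_b$ must be \emph{constant} on $b$. I would argue that on the interior of $A_1$ (where the inequalities $\ab>\Cb$, $\bb>\Cb$ are strict and $J$ is strictly below the listed bounds) the constant configurations are the \emph{only} minimizers: any deviation from a constant spin on $b$ replaces at least one $\Cb$ by $\ab$ or $\bb$ (strictly larger) and/or destroys a coincidence, strictly increasing $U(\s_b)$; this is a short finite case-check over the $3^4$ bounded configurations, organized by how many distinct spin values appear.

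Next, I would translate this local description into a global one. If $\v$ is a ground state on $A_1$ then $\v_b$ is constant for every ball $b\in M$; since every edge $\langle x,y\rangle$ lies in some ball, $\v(x)=\v(y)$ for all adjacent $x,y$, and because the Cayley tree is connected, $\v$ is globally constant. There are exactly $q=3$ such configurations, $\v\equiv 1$, $\v\equiv 2$, $\v\equiv 3$, each of which is manifestly $G_k$-periodic, i.e.\ translation-invariant, and each of which is indeed a ground state since every ball carries a constant bounded configuration with energy $U_1$. This yields exactly three translation-invariant ground states and no others, which is the assertion.

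The main obstacle, and the step deserving the most care, is the finite verification that constant bounded configurations are the \emph{unique} minimizers of $U$ on (the interior of) $A_1$ — in particular ruling out non-constant $\s_b$ that might tie the value $U_1$ on the boundary strata of $A_1$. On $\partial A_1$ (for instance when $\ab=\Cb$, or when $J$ equals one of the bounding fractions) other bounded configurations can also attain the minimum, so the clean ``only three ground states'' statement should be understood for $(\ab,\bb,\Cb)\in A_1$ with the generic choice of $J$, or one notes that even on the boundary the only \emph{periodic} ground states one can assemble consistently across the whole tree remain the three constants, because a ground state must have every ball minimal simultaneously and the non-constant minimizers cannot be glued into a globally consistent periodic pattern. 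I would handle this by listing, for each stratum of $A_1$, the augmented set of minimizing bounded configurations and checking the gluing condition $S_1$-ball by $S_1$-ball; this is routine but must be done to be rigorous. The remaining parts — connectedness forcing global constancy, and periodicity of constants — are immediate.
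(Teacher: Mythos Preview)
Your proposal is correct and follows the same approach as the paper: identify that on $A_1$ the minimum $U_1=3\cb/2+3J$ is attained only by constant bounded configurations, and deduce that the global ground states are exactly the three constant configurations $\s^{(k)}\equiv k$. The paper's proof is considerably terser---it essentially just exhibits the three constant configurations via a figure and asserts they are the translation-invariant ground states---so your connectivity argument for the ``only'' part and your discussion of boundary strata of $A_1$ actually fill in details the paper omits.
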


\begin{proof}
    Let $(\ab,\bb,\cb) \in A_1$, then one can see that for this triple, the minimal value is $\frac{3\cb}{2}+3J$, which is achieved by the configuration on $b$ (see  Figure \ref{case1})
    \begin{figure}[h!]
        \begin{center}
            \includegraphics[width=7.5cm]{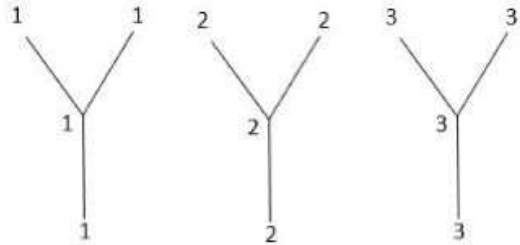}
        \end{center}
        \caption{Configurations for $A_1$.} \label{case1}
    \end{figure}

    In this case, we have three configurations
    \begin{eqnarray*}
        \s^{(k)}(x)=k, \ \ \ \forall x\in V, &k=\{1,2,3\}.
    \end{eqnarray*}
 which are translation-invariant ground states.
\end{proof}


Let $H_{a_1}=\{{x \in G_2:w_1(x) \ \ \mbox{is even}}\}$ where $w_1(x)$ is the number of letter $a_1$ in word $x \in G_2$. Note that the $H_{a_1}$ is a normal subgroup of group $G_2$ (see \cite{8} ).

\begin{thm}\label{th2}
    Let $(\ab,\bb,\cb) \in A_2$, then the following statements hold:
    \begin{itemize}
        \item[(i)] there is uncountable number of ground states;
        \item[(ii)] there exist four $H_{\{a_1\}}-$periodic ground states.
    \end{itemize}
\end{thm}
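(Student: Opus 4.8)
\medskip
\noindent\textbf{Proof plan.} Since $(\ab,\bb,\cb)\in A_2$ the minimal value among $U_1,\dots,U_{12}$ is $U_2=(2\cb+\bb)/2+J$, and on the interior of $A_2$ (strict inequalities $\ab>\bb>\cb$ and $(\bb-\cb)/4<J<(\bb-\cb)/2$) it is the strict minimum. The first step is to identify the class $C_2$: writing a ball $b$ as a centre $x$ together with $S_1(x)$, a short case inspection of \eqref{U_n} should give that $U(\s_b)=U_2$ precisely when $x$ and two of its three neighbours carry one common value $i\in\{1,2,3\}$ while the third neighbour carries a value $j$ with $|i-j|=1$. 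Hence $\v$ is a ground state iff at every vertex exactly one incident edge is a ``defect'' edge, along which the spin changes and does so only by $\pm1$. These defect edges are mutually induced, so they form a perfect matching of $\tau^2$; deleting it leaves a disjoint union of bi-infinite paths on which $\v$ is constant, with matched paths carrying values differing by $1$, and conversely any such datum is checked to give a ground state.

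For part (i) it is enough to produce uncountably many ground states. I would take the perfect matching made of all $a_1$-edges $\{x,xa_1\}$; its complementary components are the cosets $x\langle a_2,a_3\rangle$, each a bi-infinite path, and contracting them turns $\tau^2$ into an infinite tree $\ct$, which has infinite diameter and hence both colour classes of its $2$-colouring infinite. Now set $\v\equiv 2$ on all components in one colour class, and on each component of the other class let $\v$ be the constant $1$ or the constant $3$, the choice made independently for every such component. By the structural description each ball of such a $\v$ is of type $C_2$, so $\v$ is a ground state, while two distinct choices differ on an entire component and therefore give distinct configurations; since the second colour class is infinite this yields uncountably many ground states.

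For part (ii) I would compute directly. An $H_{\{a_1\}}$-periodic configuration is given by a pair $(\s_0,\s_1)\in\{1,2,3\}^2$, equal to $\s_0$ on $H_0=H_{\{a_1\}}$ and to $\s_1$ on $H_1=G_2\setminus H_{\{a_1\}}$. For $x\in H_0$ one has $xa_1\in H_1$ while $xa_2,xa_3\in H_0$, so the ball at $x$ has centre $\s_0$ and neighbours $\s_0,\s_0,\s_1$; symmetrically for $x\in H_1$. By the description of $C_2$ such a ball is of minimal energy iff $|\s_0-\s_1|=1$, and the pairs solving this are exactly $(1,2),(2,1),(2,3),(3,2)$, giving four $H_{\{a_1\}}$-periodic ground states (and, on the interior of $A_2$, precisely these).

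The main obstacle is the first step, i.e. proving that membership in $C_2$ forces the perfect-matching/bi-infinite-path picture, and in particular that the purely local ``one defect edge per vertex'' condition is globally coherent, so that the defect edges genuinely constitute a perfect matching whose complementary components are bi-infinite rather than only locally path-like. Once this structure result is in hand, (ii) is an immediate finite check, and for (i) the explicit $a_1$-edge matching bypasses most of the structure theory, leaving only the routine ball-by-ball verification and the cardinality count.
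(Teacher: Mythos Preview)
Your argument is correct. For part~(ii) you do exactly what the paper does: parametrize $H_{\{a_1\}}$-periodic configurations by a pair $(\s_0,\s_1)$, observe that every ball then has centre value repeated on two neighbours and the third neighbour carrying the other value, and conclude that the ball lies in $C_2$ precisely when $|\s_0-\s_1|=1$, yielding the four pairs $(1,2),(2,1),(2,3),(3,2)$.

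For part~(i) the routes diverge. The paper argues by a direct recursive extension: starting from a fixed ball it walks through the possible values of $\s(C_{b_\downarrow}),\s(C_b)$ and records where the continuation branches (e.g.\ the choice between $(i_1)$ and $(i_2)$ when $\s(C_{b_\downarrow})=\s(C_b)=2$), concluding that infinitely many independent binary choices arise and hence uncountably many ground states. Your approach is more structural: you first read off from $U_2=(2\cb+\bb)/2+J$ that $C_2$ consists exactly of balls whose centre agrees with two neighbours and differs by $\pm1$ from the third, so that a ground state is equivalent to a perfect matching of ``defect'' edges with constant spin on each complementary bi-infinite path and a $\pm1$ jump across each matching edge. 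You then pick the explicit $a_1$-edge matching and assign $2$ to one colour class of the contracted tree and a free choice from $\{1,3\}$ to each component of the other class. This is cleaner and in fact yields a complete description of all ground states (on the interior of $A_2$), which the paper's case analysis does not. Note, as you yourself point out at the end, that the converse direction of the structure theorem is not needed for the statement as written: the ball-by-ball check that your explicit family lies in $C_2$, together with the observation that the free colour class is infinite (it is the set of $\langle a_2,a_3\rangle$-cosets contained in $H_1$, or in $H_0$, both infinite), already gives~(i). So the ``main obstacle'' you flag is real for a full classification but not for this theorem.
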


\begin{proof}
 Let $(\ab,\bb,\cb) \in A_2$, then the minimal value of $U(\s_b)$ is $(2\cb+\bb)/2+J$,
    which is achieved by the configurations on $b$ given in Figures \ref{case2a} and \ref{case2b}.

        \begin{figure}[h!]
            \centering
            \begin{minipage}[b]{0.4\textwidth}
                \includegraphics[width=\textwidth]{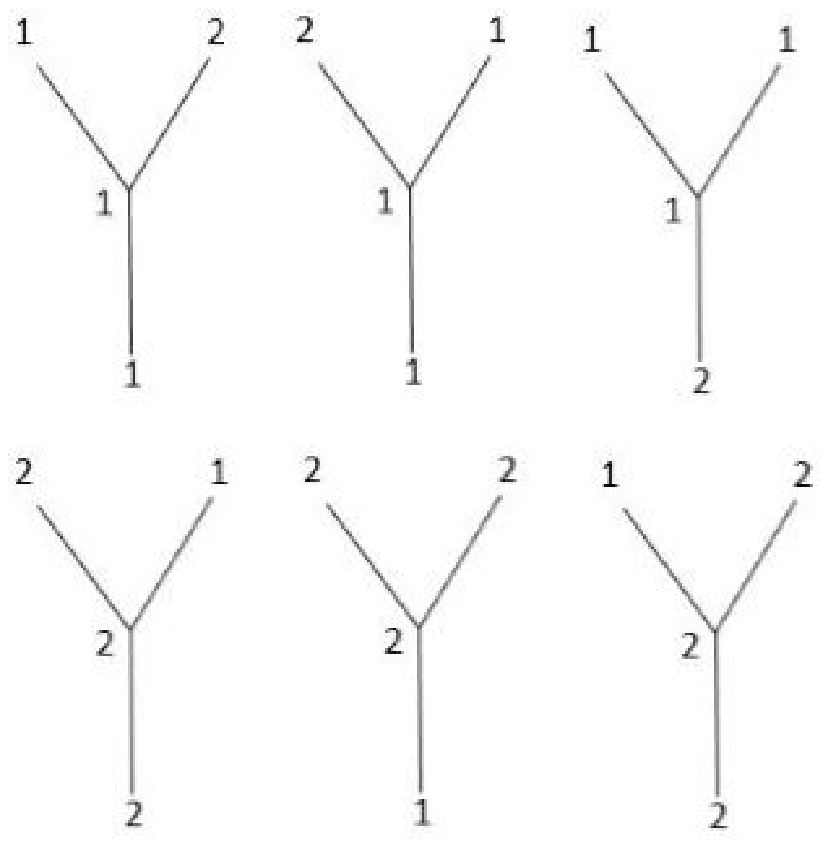}
                \caption{$\Gamma_1$-Configuration for $A_2$}\label{case2a}
            \end{minipage}
            \hfill
            \begin{minipage}[b]{0.4\textwidth}
                \includegraphics[width=\textwidth]{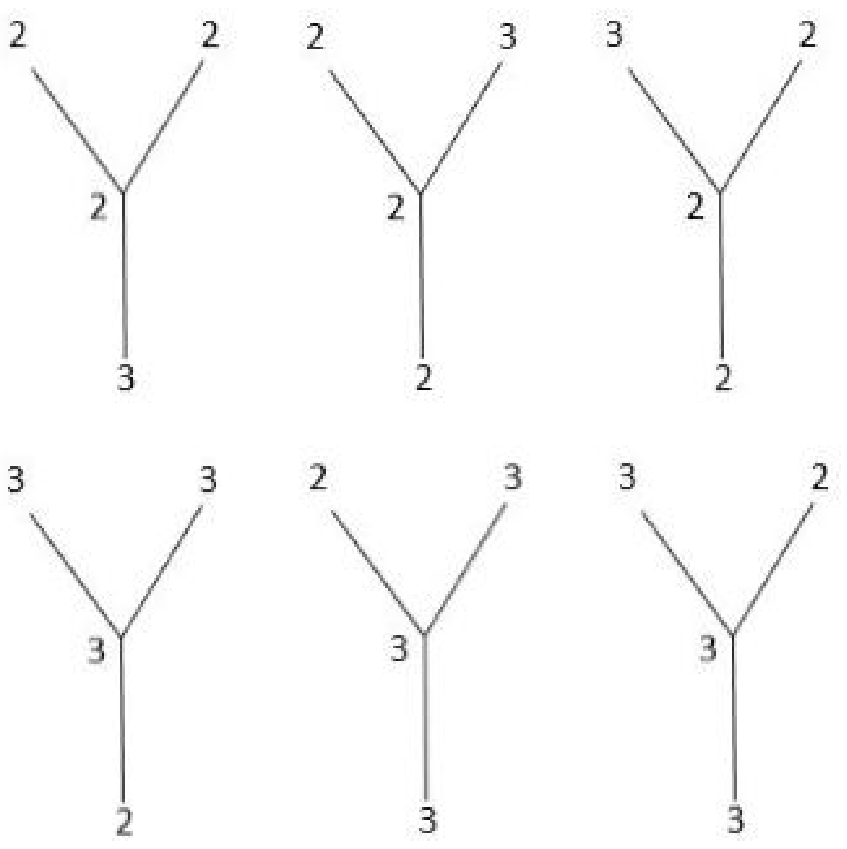}
                \caption{$\Gamma_2$-Configuration for $A_2$}\label{case2b}
            \end{minipage}
        \end{figure}

    \begin{itemize}
        \item[(i)] Let us construct ground states by means of configurations given by Figures \ref{case2a} and \ref{case2b}:

        \begin{figure}[h!]
            \begin{center}
                \includegraphics[width=7.5cm]{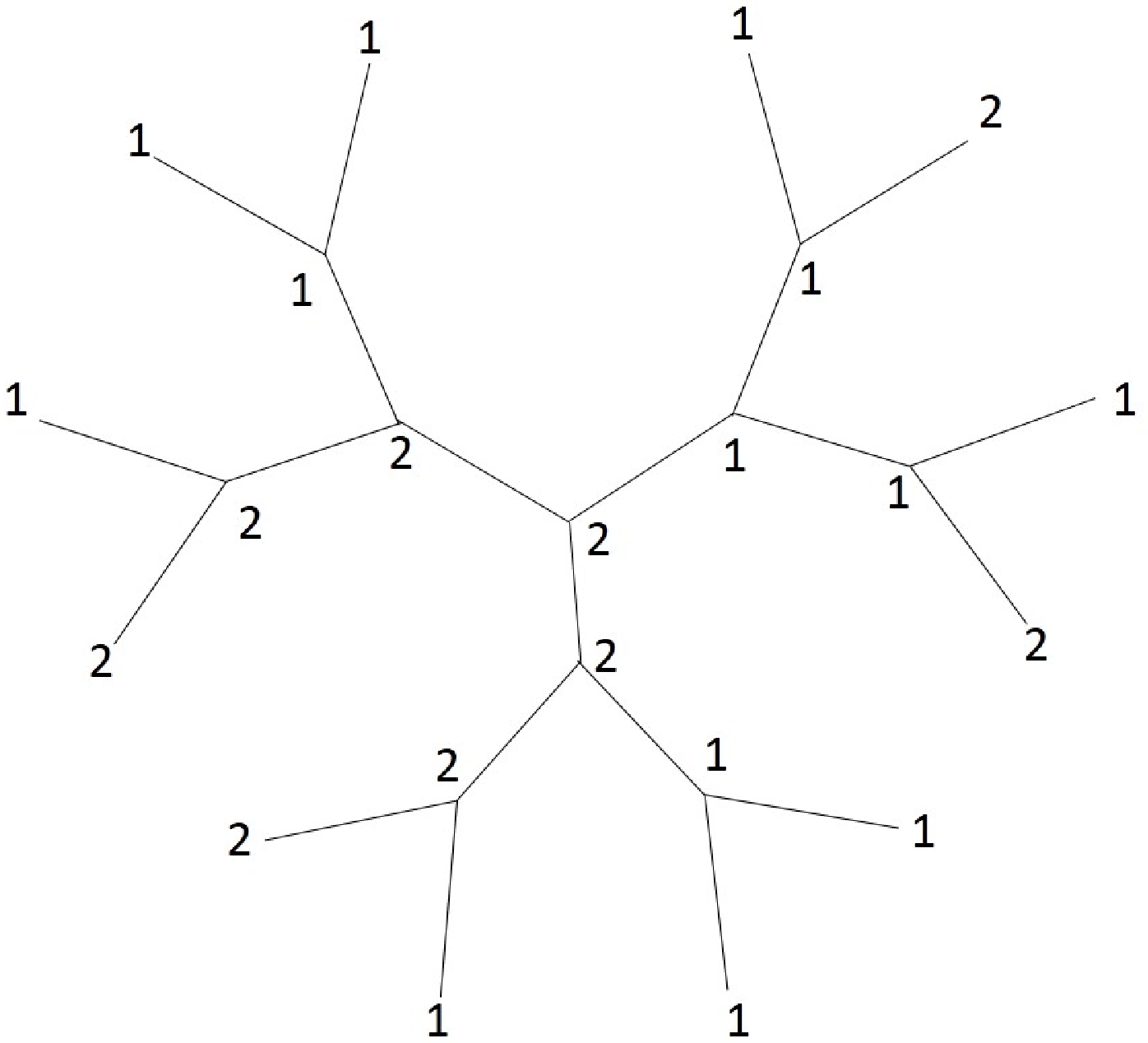}
            \end{center}
            \caption{Example for Cayley tree by \ref{case2a}} \label{example2A}
        \end{figure}

        We choose an initial ball $b$, and let $\s_b$ be a configuration on $b$. Let us consider several cases with respect to $\Cbd,C_b$ and $\s_b$.

        \begin{itemize}
            \item[Case(1)] Let $\s(\Cbd)=2$ and $\s(C_b)=2$, then we can construct different combinations by choosing $\s(x_b)$ and $\s(y_b)$ as follows:

            $(i_1)$ $\s(x_b)=1$ and $\s(y_b)=2$, $(i_2)$ $\s(x_b)=2$ and $\s(y_b)=3$.

            In case (1), we need to plug configuration from $\Gamma_1$ on ball $b_1$ and for the ball $b_2$
             we can plug configurations  from  $\Gamma_1$ and $\Gamma_2$ by the following rule:

            \begin{itemize}

                \item[(a)]$\s(C_{{b_1}{\downarrow}})=2$, $\s(C_{b_1})=1$, for which we only have possibility $\s(x_{b_1})=1$, $\s(y_{b_2})=1$.

                \item[(b)] $\s(C_{{b_2}{\downarrow}})=2$, $\s(C_{b_1})=2$, for which we have again to possibilities $(i_1)$, $(i_2)$ as above.
                In this case, to plug the configuration with $\s(C_{{b}{\downarrow}})=2$, $\s(C_{b})=2$. When $\s(x_b)=2$,
                we are again in the same situation what we are considering. If $\s(x_b)=1$, the further plug configuration from $\Gamma_1$, we have only one possibility.
                Hence, this is reduced to Case (1).
            \end{itemize}

        \item[Case(2)] Let $\s(\Cbd)=1$ and $\s(C_b)=2$. In this case, we only have one possibility, $\s(x_b)=2$,$\s(y_b)=2$.
        It is easy to see that in this case, we immediately reduce to the case which was considered above (see case (b)).
        \item[Case(3)] Let $\s(\Cbd)=2$ and $\s(C_b)=3$. In this case, we only have one possibility, $\s(x_b)=3$, $\s(y_b)=3$.
        Let $\s(\Cbd)=3$ and $\s(C_b)=3$. In this case, we only have one possibility, $\s(x_b)=3$, $\s(y_b)=2$.
        Let $\s(\Cbd)=3$ and $\s(C_b)=2$. In this case, we only have one possibility, $\s(x_b)=2$, $\s(y_b)=2$.
        It is easy to see that in this case, we immediately reduce to the case which was considered above (see (b)). Then there uncountable number of ground states.

        \end{itemize}
        We can construct ground states using only configurations given by $\Gamma_1$ (see figure \ref {example2A}).

        \item[(ii)]  We consider the quotient group $G_2/H_{\{a_1\}}=\{H_0, H_1\},$ where
        $$
        H_0=H_{\{a_1\}}, H_1=\{{x \in G_2:w_1(x) \ \ \mbox{is odd}}\}.
        $$

        Let
        \begin{equation}\label{1}
        \v(x)=\left\{
        \begin{array}{ll}
        i, \ \mbox{if} \ x \in H_0,\\
        j, \ \mbox{if} \ x \in H_1,
        \end{array}\right.
        \end{equation}
        be a $H_{a_1}-$periodic configuration (see Figure \ref{percase2}), where $|i-j|=1$.
        We are going to prove that $\v$ is a  ground state. Let $b\in M$ be an arbitrary unit ball and $C_b\in H_0$, then it is easy to see that 
        $|\{C_{{b}{\downarrow}}, x_b,y_b\}\cap H_0|=2$ and $|\{C_{{b}{\downarrow}}, x_b,y_b\}\cap H_1|=1$. In this case, there are the following possibilities:

        1) $\v(C_b)=i, \v(C_{{b}{\downarrow}})=i, \v(x_b)=i, \v(y_b)=j$;\\
        2) $\v(C_b)=i, \v(C_{{b}{\downarrow}})=i, \v(x_b)=j, \v(y_b)=i$;\\
        3) $\v(C_b)=i, \v(C_{{b}{\downarrow}})=j, \v(x_b)=i, \v(y_b)=i$;\\
        In all cases $U(\v_b(x))=(2\ab+\bb)/2+J$.

      If $C_b\in H_1$, then it is easy to see that $|\{C_{{b}{\downarrow}}, x_b,y_b\}\cap H_0|=1$ and $|\{C_{{b}{\downarrow}}, x_b,y_b\}\cap H_1|=2$.
      Again, in this setting, we have the following possibilities:

        1) $\v(C_b)=j, \v(C_{{b}{\downarrow}})=i, \v(x_b)=j, \v(y_b)=j$;\\
        2) $\v(C_b)=j, \v(C_{{b}{\downarrow}})=j, \v(x_b)=i, \v(y_b)=j$;\\
        3) $\v(C_b)=j, \v(C_{{b}{\downarrow}})=j, \v(x_b)=j, \v(y_b)=i$;\\
        As before, in all cases, one has $U(\v_b(x))=(2\cb+\bb)/2+J$, i.e. $\v_b \in C_2, \forall b\in M$.  Hence, the periodic configuration $\v$ is a ground state.

        \begin{figure}[h!]
            \begin{center}
                \includegraphics[width=10.5cm]{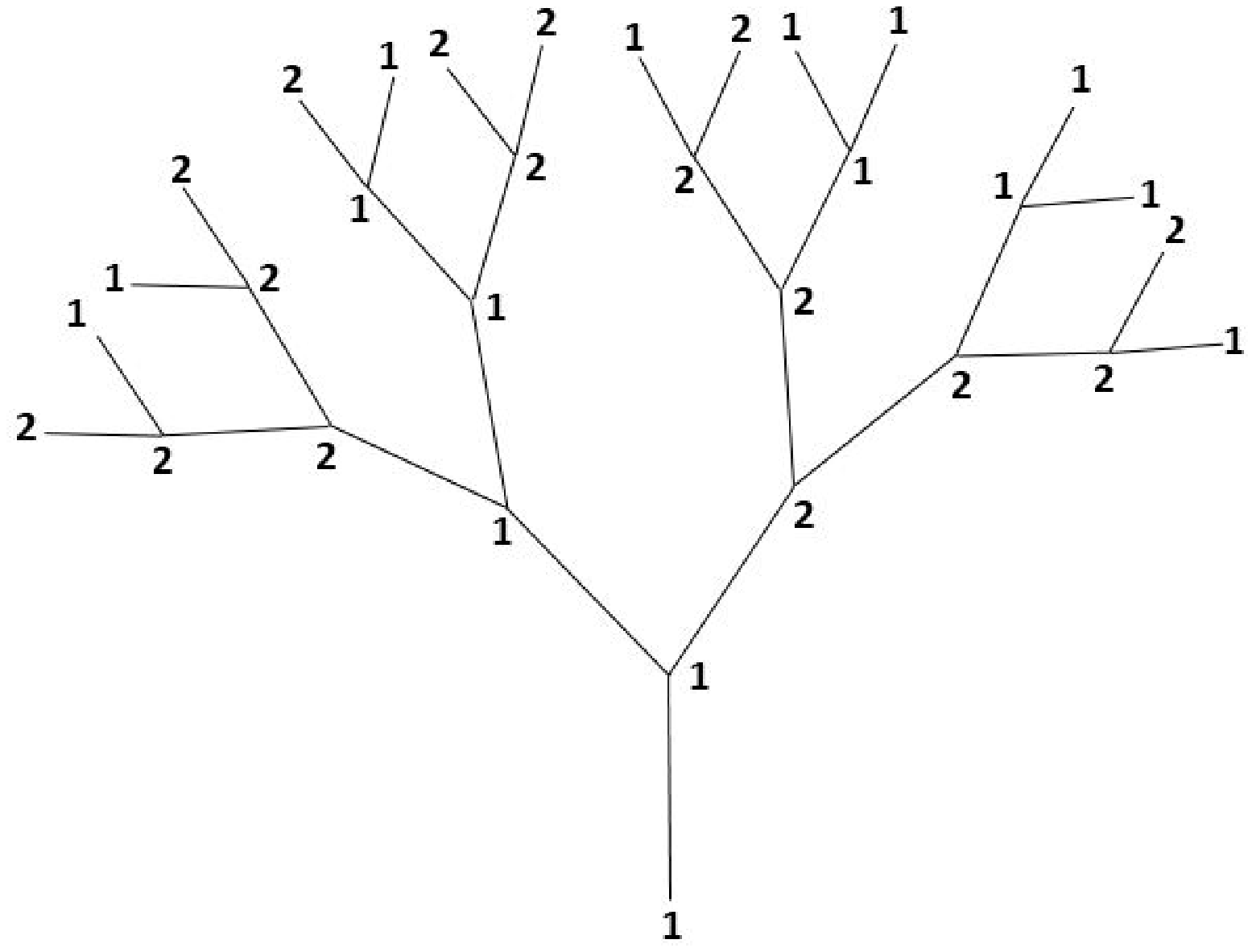}
            \end{center}
            \caption{Reduced Cayley Tree for $A_2$} \label{percase2}
        \end{figure}

        \end{itemize}
\end{proof}

\begin{thm}\label{th1}
    Let $(\ab,\bb,\cb, J) \in A_3$, then there exist only two $H_{\{a_1\}}$-periodic ground states.
\end{thm}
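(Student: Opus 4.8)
The plan is to mimic the proof of Theorem~\ref{th2}(ii): first pin down the class $C_3$ of bounded configurations that realize the minimal energy $U_3=(2\cb+\ab)/2+J$ on $A_3$, then exhibit the $H_{\{a_1\}}$-periodic configurations all of whose unit balls lie in $C_3$, and finally show no further $H_{\{a_1\}}$-periodic ground state can exist.

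First I would describe $C_3$ explicitly. Write a bounded configuration $\s_b$ through the centre spin $\s(C_b)$ and the three neighbour spins $\s(C_{{b}{\downarrow}}),\s(x_b),\s(y_b)$; then $U(\s_b)=\frac12(\l_1+\l_2+\l_3)+J\cdot(\#\,\text{coinciding neighbour pairs})$, where $\l_1,\l_2,\l_3$ are the three centre-to-neighbour $\l$-values. Since $\l$ takes only the values $\ab,\bb,\cb$, the only way (for a generic quadruple in the interior of $A_3$, where all $U_k$ with $k\neq3$ are strictly larger than $U_3$) to obtain total energy $U_3$ is $\{\l_1,\l_2,\l_3\}=\{\cb,\cb,\ab\}$ together with exactly one coinciding neighbour pair. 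Hence $\s_b\in C_3$ iff the centre spin is some $s$, two neighbours equal $s$ (contributing $\cb,\cb$ and one coinciding pair, i.e. the term $J$), and the third neighbour equals $s'$ with $|s-s'|=2$; over $\{1,2,3\}$ this forces $\{s,s'\}=\{1,3\}$. In particular every ball of a $C_3$-configuration has centre spin in $\{1,3\}$ and precisely one ``odd'' neighbour, whose spin is $4-s$.

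Next I would use the parity structure of $H_{\{a_1\}}$. The three neighbours of a vertex $x$ are reached by the generators $a_1,a_2,a_3$, and $w_1(\cdot)$ changes parity only across the $a_1$-edge; hence every vertex has exactly one neighbour in the opposite class of $\{H_0,H_1\}$ and two neighbours in its own class. Let $\v$ be an arbitrary $H_{\{a_1\}}$-periodic configuration, so $\v\equiv i$ on $H_0$ and $\v\equiv j$ on $H_1$ for some $i,j\in\{1,2,3\}$. For any ball $b$, the centre $C_b$ has two neighbours carrying the centre's value and one carrying the other value; comparing with the description of $C_3$, the unique ``odd'' neighbour is exactly the opposite-class neighbour, so $\v_b\in C_3$ forces $\v(C_b)\in\{1,3\}$ and $|i-j|=2$, i.e. $\{i,j\}=\{1,3\}$ with $i\neq j$. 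Conversely, for $(i,j)=(1,3)$ and $(i,j)=(3,1)$ a direct check (as in Theorem~\ref{th2}(ii)) shows every ball lies in $C_3$, so both are ground states. Therefore there are exactly two $H_{\{a_1\}}$-periodic ground states.

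The main obstacle is the bookkeeping in the first step: verifying, from the list \eqref{U_n}, that on the interior of $A_3$ the decomposition $\{\l_1,\l_2,\l_3\}=\{\cb,\cb,\ab\}$ with a single coinciding neighbour pair is the \emph{only} way to hit the value $U_3$ (ruling out combinations involving a $3J$ or a vanishing next-nearest-neighbour term), together with the care needed to keep the statement confined to $H_{\{a_1\}}$-periodic configurations. Indeed, on the boundary face $J=(\ab-\cb)/4$ one has $U_1=U_3$, so the constant (translation-invariant) configurations also become ground states; one should therefore either restrict to the interior of $A_3$ or record this degeneracy explicitly.
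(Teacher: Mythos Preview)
Your proof is correct and cleanly organized, but it proceeds differently from the paper's own argument. The paper does not restrict a priori to $H_{\{a_1\}}$-periodic configurations: it fixes an initial ball $b$, runs through the admissible choices of $(\s(C_{b\downarrow}),\s(C_b))$ compatible with $C_3$, and propagates the constraints outward, using the observation that swapping the two subtrees at $x_b,y_b$ does not change the configuration. In this way it argues that \emph{every} ground state (not just the $H_{\{a_1\}}$-periodic ones) must be one of the two configurations $\v_{1,3},\v_{3,1}$, and then recognizes them as $H_{\{a_1\}}$-periodic. Your approach instead characterizes $C_3$ explicitly, exploits the parity structure of $H_{\{a_1\}}$ (two same-class neighbours, one opposite-class neighbour), and enumerates directly which $H_{\{a_1\}}$-periodic configurations land in $C_3$. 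What you gain is a shorter, more transparent argument for the literal statement; what you give up is the stronger conclusion, implicit in the paper, that these two are in fact the \emph{only} ground states in $A_3$. Your remark on the boundary face $J=(\ab-\cb)/4$, where $U_1=U_3$ and the constant configurations intrude, is a genuine refinement that the paper's proof does not address.
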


\begin{proof}

Let $(\ab,\bb,\cb) \in A_3$, then one can see that for this triple, the minimal value is $(2\cb+\ab)/2+J$, which is achieved by the configurations on b given by Figure \ref{case3}.

\begin{figure}[h!]
    \begin{center}
        \includegraphics[width=7cm]{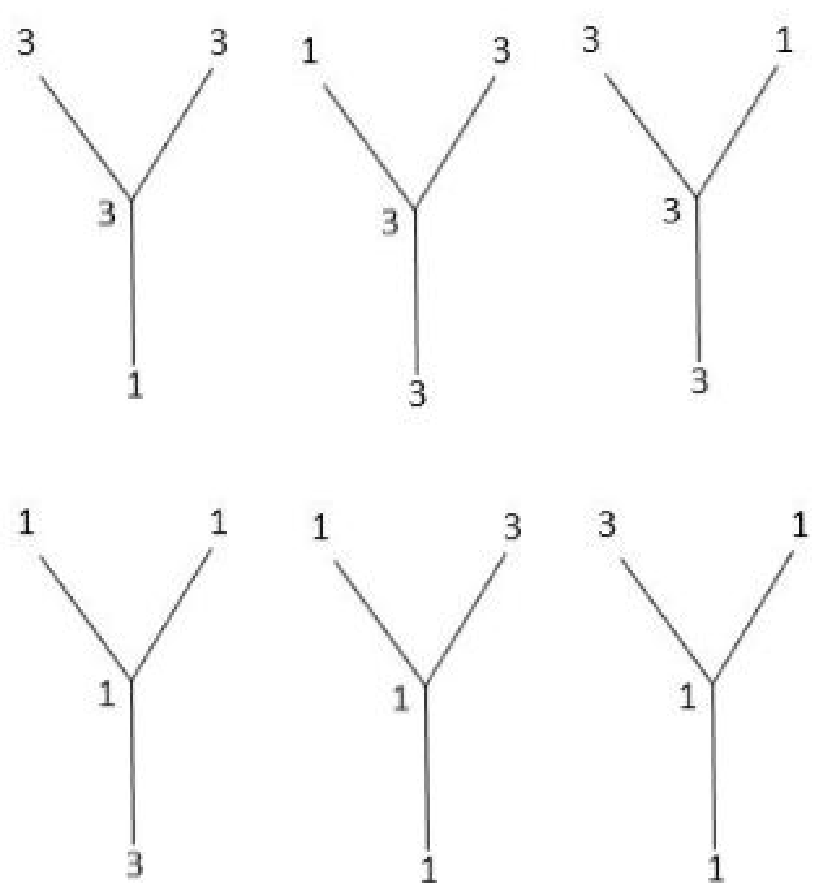}
    \end{center}
    \caption{Configurations for $A_3$} \label{case3}
\end{figure}

   In this case we can construct only two configuration $\s$, which $\s_b\in C_3, \forall b\in M.$ We choose an initial ball $b$ and $\s_b$ from Figure \ref{case3}.
   Let us $\s(C_{{b}{\downarrow}})=1$ and $\s(C_b)=3$, then we have only one case $\s(x_b)=3, \s(y_b)=3$. If $\s(C_{{b}{\downarrow}})=3$ and $\s(C_b)=3$,
   then we have the following cases $\s(x_b)=3, \s(y_b)=1$ or $\s(x_b)=1, \s(y_b)=3$.  Now, we notice that if one interchanges the trees issues from the vertices $x_b$ and $y_b$, respectively, then
   the configuration does not change.    
   ]Therefore,  in both cases we have the same configuration.
    If $\s(C_{{b}{\downarrow}})=3$ and $\s(C_b)=1$, then we have only one case $\s(x_b)=1, \s(y_b)=1$. If $\s(C_{{b}{\downarrow}})=1$ and $\s(C_b)=1$, then we
    have the following cases $\s(x_b)=3, \s(y_b)=1$ or $\s(x_b)=1, \s(y_b)=3$. Again using above notice, in both cases one gets the same configuration.

   It is easy to see that this configurations are $H_{\{a_1\}}-$periodic and have the form

   \begin{equation}\label{2}
    \v_{i,j}(x)=\left\{
    \begin{array}{ll}
    i&,x \in H_0,\\
    j&,x \in H_1,
    \end{array}\right.
    \end{equation}
   where $|i-j|=2.$  Using the argument of the proof of Theorem \ref{th2} we can prove that configurations $\v_{i,j}$ are ground states.
    Note that a number of configurations $\v_{i,j}$, (with $ |i-j|=2, i,j\in \Phi$) is two. For example, the configuration $\v_{1,3}$ is presented in Figure \ref{percase3} on reduced Cayley tree. 
    \begin{figure}[h!]
        \begin{center}
            \includegraphics[width=10.5cm]{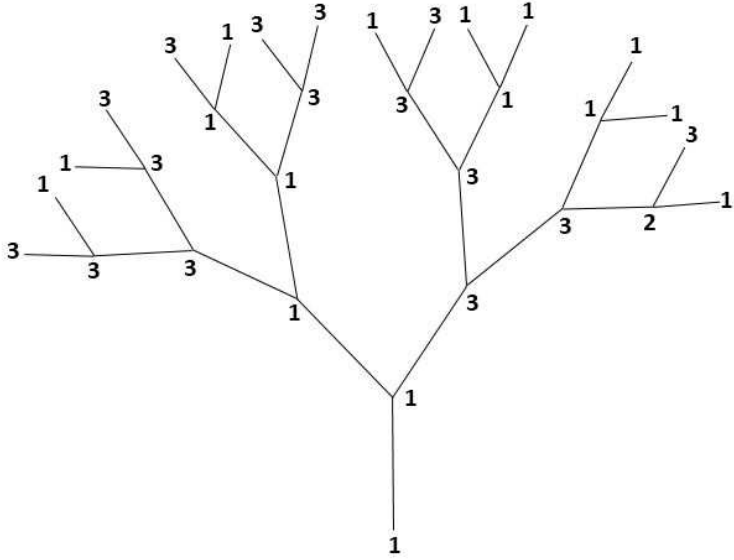}
        \end{center}
        \caption{Reduced Cayley Tree for $A_3$} \label{percase3}
    \end{figure}
\end{proof}

\begin{thm}\label{th3}
    Let $(\ab,\bb,\cb, J) \in A_4$, then there is not ground states.
\end{thm}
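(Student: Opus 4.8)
The plan is to pin down the bounded configurations that realise the minimum $\min_{1\le k\le 12}U_k$ when $(\ab,\bb,\cb,J)\in A_4$, and then to show that no family of such bounded configurations can be glued together along the edges of $\tau^2$ into a single configuration on $V$; by Definition \ref{gs} this yields the absence of ground states.

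First I would compute the minimum. Since $\ab=\bb$ on $A_4$, \eqref{U_n} gives $U_4=U_7=U_9=\tfrac{3}{2}\ab+J$, and a short check shows this common value equals $\min_{1\le k\le 12}U_k$ on all of $A_4$ (with $U_6,U_{11}$ joining it on the face $J=0$ and $U_8,U_{10}$ on the face $J=\tfrac{\cb-\ab}{2}$). So in a ground state $\v$ every ball must satisfy $U(\v_b)=\tfrac{3}{2}\ab+J$, and I would next enumerate all $\s_b$ realising this value. Off the two boundary faces one checks that $\s_b$ must have all three incident edges of weight $\ab$ or $\bb$---equivalently, $\s(C_b)$ differs from each of $\s(C_{b\downarrow}),\s(x_b),\s(y_b)$---and exactly one coincidence among $\s(C_{b\downarrow}),\s(x_b),\s(y_b)$; hence the three outer values form a multiset $\{p,p,r\}$ with $p\neq r$, both different from $\s(C_b)$, i.e.\ they are the two spin values other than $\s(C_b)$, one of them doubled. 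In particular all three spins occur around every vertex, and the value $2$ appears on infinitely many vertices.

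Then I would try to propagate this local rule from a fixed ball outward, from each ball to the two balls centred at its successors: knowing the value of a vertex $x$ and of its predecessor $\xd$, the rule fixes the admissible unordered pairs of values on $S(x)$, and I would track how far this can be continued with every ball kept minimal. The expected obstruction is a colouring one. A vertex carrying the value $2$ cannot be the centre of a ball in which some incident edge has weight $\ab$, since $|2-i|\le 1$ for all $i\in\{1,2,3\}$; hence at every $2$-vertex all three incident edges have weight $\bb$, and its three neighbours comprise both the values $1$ and $3$, one of them doubled. Feeding this back into the $1$- and $3$-centred balls and iterating one or two further generations should, I expect, force some ball out of the minimal set, which is the contradiction sought.

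The main obstacle is exactly the degeneracy $U_4=U_7=U_9$ forced by $\ab=\bb$: there is no single reference bounded configuration to argue against (unlike in the proofs of Theorems \ref{th2} and \ref{th1}), and in particular the $2$-centred, all weight-$\bb$ balls of type $U_7$ are admissible, so the one-line remark ``a $2$-vertex cannot be a centre'' is not available here. The real content---and the delicate point of the whole argument---is to show that configurations built from these $2$-centred balls together with the $1$- and $3$-centred ones still cannot be completed on all of $\tau^2$. I would organise this as a finite case analysis over the triple (value of $x$, value of $\xd$, which of the minimal types the ball at $x$ realises) combined with a short induction on the level $n$, treating the two boundary faces $J=0$ and $J=\tfrac{\cb-\ab}{2}$ as separate sub-cases where the list of minimal balls is longer.
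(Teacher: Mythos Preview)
You have correctly isolated the point that the paper's own short argument glosses over: on $A_4$ the constraint $\ab=\bb$ forces $U_4=U_7=U_9=\tfrac32\ab+J$, so the minimal bounded configurations are not only the ``one $\ab$-edge, two $\bb$-edges'' type that the paper's proof implicitly uses when it asserts $\v(C_b)\in\{1,3\}$. The paper excludes centre $2$ by reasoning as though $C_4$ contained only the structural $U_4$-type balls; with $\ab=\bb$ the $2$-centred balls with outer multiset $\{1,1,3\}$ or $\{1,3,3\}$ (the $U_7$-type) have the same energy and are equally minimal, so that one-line exclusion is unjustified---exactly as you observed.

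The gap in your plan, however, is the expectation that ``iterating one or two further generations should force some ball out of the minimal set''. It does not. The local rule you derived---centre different from all three neighbours, and the three outer values forming a multiset $\{p,p,r\}$ with $p\ne r$---can be realised on \emph{every} ball of $\tau^2$. Explicitly: set $\v(e)=1$ and colour the three vertices of $S_1(e)$ by $2,2,3$; thereafter, for each vertex $v$ with $\v(v)=c$ and $\v(v_{\downarrow})=q$, colour the two elements of $S(v)$ by $q$ and by the remaining value $r\notin\{c,q\}$. Then the ball at $v$ has outer multiset $\{q,q,r\}$, exactly one coincidence, and no edge with equal endpoints, so $U(\v_b)=\tfrac32\ab+J$ for every $b\in M$. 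On the interior of $A_4$ (namely $\cb>\ab=\bb$ and $0<J<\tfrac{\cb-\ab}{2}$) this value is the strict minimum of $U_1,\dots,U_{12}$, hence $\v$ is a ground state in the sense of Definition~\ref{gs}. Thus the statement you set out to prove is in fact false on the interior of $A_4$, and neither the paper's argument nor the case analysis you outline can be completed; the independent binary choice at each vertex even yields uncountably many such ground states.
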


\begin{proof}
Let $(\ab,\bb,\cb, J) \in A_4$, then one can see that for this triple, the minimal value is $(2\bb+\ab)/2+J$.
Let $\v(x)$ configuration be a ground state, which is for any $b\in M$, $\v_b(x)\in C_4$. Then it must be $\v(C_b)=1$ or $\v(C_b)=3$, because $U(\v_b)=(2\bb+\ab)/2+J$. From $\v_b(x)\in C_4$ we have, that one of the following variables $|\v(C_b)-\v(C_{{b}{\downarrow}})|, |\v(C_b)-\v(x_b)|, |\v(C_b)-\v(y_b)|$ must be equal to 2 and another two are equal to 1. Then some two of $\v(C_{{b}{\downarrow}}), \v(x_b), \v(y_b)$ equal to 2. But we do not have $b'\in M$ with $\v(C_{b'})=2$ and $\v_{b'}\in C_4.$ 
Consequently, there is not any ground state.
\end{proof}

Let  $G_2^{\{4\}}=\{x \in G_2:w_x(a_1)-even,\ w_x(a_2)-even\}$.
Note that $G_2^{\{4\}}$ is a normal subgroup of index four.
Consider the following quotient group $G_2/G_2^{\{4\}}=\{H_0, H_1,
H_2, H_4\},$ where
$$
\begin{array}{llll}
           H_0=G_2^{\{4\}}\\
            H_1=\{x \in G_2:w_x(a_1)-even,\ w_x(a_2)-odd\},\\
            H_2=\{x \in G_2:w_x(a_1)-odd,\ w_x(a_2)-even\},\\
            H_3=\{x \in G_2:w_x(a_1)-odd,\ w_x(a_2)-odd\}.
        \end{array}$$

\begin{thm}\label{th}
    Let $(\ab,\bb,\cb, J) \in A_5$, then there exist only two $G_2^{\{4\}}$-periodic ground states for index 4.

\end{thm}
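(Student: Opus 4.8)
The plan is to follow the pattern of the proofs of Theorems~\ref{th2} and~\ref{th1}: first identify which bounded configurations realise the minimal value $U_5$, and then determine which $G_2^{\{4\}}$-periodic configurations have every one of their ball-restrictions in that class.

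First I would describe the class $C_5$. For a ball $b$ with centre $C_b$ and the three neighbours $C_{{b}{\downarrow}},x_b,y_b$, write $U(\s_b)=\tfrac12(n_{\ab}\ab+n_{\bb}\bb+n_{\cb}\cb)+n_J J$, where $n_{\ab}+n_{\bb}+n_{\cb}=3$ count the three edges issuing from $C_b$ according to the value of $\l$ on them, and $n_J$ is the number of monochromatic next-nearest-neighbour pairs inside $\{C_{{b}{\downarrow}},x_b,y_b\}$. On $A_5$ one checks that $U_5=(2\ab+\cb)/2+J$ is indeed the smallest of $U_1,\dots,U_{12}$; since $A_5$ has nonempty interior in $\mathbb{R}^4$, two distinct affine forms in $(\ab,\bb,\cb,J)$ cannot agree on it, so $U(\s_b)=U_5$ forces $(n_{\ab},n_{\bb},n_{\cb},n_J)=(2,0,1,1)$. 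Reading this off: $n_{\bb}=0$ forbids the centre value $2$, and $n_{\ab}=2$, $n_{\cb}=1$, $n_J=1$ then force $\s(C_b)\in\{1,3\}$ with two of its three neighbours equal to $3$ and one equal to $1$ when $\s(C_b)=1$, and two equal to $1$ and one equal to $3$ when $\s(C_b)=3$. In particular a ground state uses no spin $2$, and at every vertex exactly one of the three neighbours carries the same spin.

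Next I would exploit the coset structure. The neighbours of a vertex $x$ are $xa_1,xa_2,xa_3$, and $x\mapsto(w_1(x),w_2(x))\bmod 2$ is a homomorphism of $G_2$ onto $\mathbb{Z}_2^2$ with kernel $G_2^{\{4\}}$: multiplication by $a_3$ fixes the coset, while $a_1$ and $a_2$ flip the first, respectively the second, coordinate. Hence a vertex of $H_0$ has its neighbours in $H_0,H_1,H_2$, a vertex of $H_1$ in $H_1,H_0,H_3$, a vertex of $H_2$ in $H_2,H_0,H_3$, and a vertex of $H_3$ in $H_3,H_1,H_2$. Let $\v$ be $G_2^{\{4\}}$-periodic with value $\s_i$ on $H_i$. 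Applying the description of $C_5$ to a ball centred in $H_0$ gives $\s_1=\s_2$ and $\{\s_0,\s_1\}=\{1,3\}$; the balls centred in $H_1$ and $H_2$ give $\s_0=\s_3$ together with $\{\s_0,\s_1\}=\{1,3\}$; and the balls centred in $H_3$ add nothing new. Solving, one finds $\s_1=\s_2$, $\s_0=\s_3$, $\{\s_0,\s_1\}=\{1,3\}$, i.e. exactly the two configurations $(\s_0,\s_1,\s_2,\s_3)=(1,3,3,1)$ and $(3,1,1,3)$. One then checks directly, by inspecting the four types of ball (one per coset), that for each of these two configurations every ball lies in $C_5$, so both are genuine ground states; therefore there are precisely two $G_2^{\{4\}}$-periodic ground states, and they are exchanged by the global flip $1\leftrightarrow 3$.

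I expect the main obstacle to be the first step — showing that on $A_5$ the minimum is $U_5$ and, more delicately, that the minimising ball-configurations are exactly the ones listed, i.e. that no configuration attains the value $U_5$ through a different combination of $\ab,\bb,\cb,J$ (this is where full-dimensionality of $A_5$ is used) and correctly tracking which of the twelve classes $U_m$ a near-miss configuration would fall in. Once $C_5$ is correctly pinned down, the rest is routine: the combinatorics of how the four cosets sit next to one another in $\tau^2$ plus an elementary two-valued linear system with exactly the two asserted solutions.
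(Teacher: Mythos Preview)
Your argument is correct and cleanly organised. The characterisation of $C_5$ via the affine-form identity (using that $A_5$ has nonempty interior) is sound, your computation of the coset neighbour structure for $G_2^{\{4\}}$ matches the paper's, and the resulting $2\times 2$ system indeed has exactly the two solutions $\v^{(5)}_{1,3}$ and $\v^{(5)}_{3,1}$.

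The paper's proof reaches the same conclusion by a different route. Instead of starting from a general $G_2^{\{4\}}$-periodic configuration and solving constraints, it argues constructively (in the style of its proof for $A_3$): fixing an initial ball and running through the possibilities for $(\s(C_{b_\downarrow}),\s(C_b))$ with $\s_b\in C_5$, it shows that the extension is essentially forced, so that only two global configurations $\s$ with $\s_b\in C_5$ for every $b$ can arise at all; it then observes that these two configurations are $G_2^{\{4\}}$-periodic of the form \eqref{3} and verifies, coset by coset, that they are ground states. Thus the paper actually proves the slightly stronger statement that there are only two ground states in total, which happen to be $G_2^{\{4\}}$-periodic, whereas your argument is tailored to the theorem as stated and works entirely inside the class of $G_2^{\{4\}}$-periodic configurations. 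Your approach is tidier and less case-heavy; the paper's buys the extra conclusion that no aperiodic ground states exist on $A_5$.
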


\begin{proof}
 Let $(\ab,\bb,\cb,J) \in A_5$, then one can see that for this triple, the minimal value is $(2\ab+\cb)/2+J$.

In this case, we can construct only two configuration $\s$, for which
$\s_b\in C_5, \forall b\in M.$ Let $b$ be any initial ball from
$M$. Let $\s(C_{{b}{\downarrow}})=1$ and $\s(C_b)=1$, then we
have only one case with $\s(x_b)=3, \s(y_b)=3$. If
$\s(C_{{b}{\downarrow}})=1$ and $\s(C_b)=3$, then one finds the
following cases: $\s(x_b)=3, \s(y_b)=1$ or $\s(x_b)=1, \s(y_b)=3$.
Here, we are again in the same situation as in the proof of Theorem \ref{th1}. Hence, in both cases we have the same
configuration. If $\s(C_{{b}{\downarrow}})=1$ and $\s(C_b)=1$,
then we have only one case $\s(x_b)=3, \s(y_b)=3$. If
$\s(C_{{b}{\downarrow}})=3$ and $\s(C_b)=1$, then one has the
following cases: $\s(x_b)=3, \s(y_b)=1$ or $\s(x_b)=1, \s(y_b)=3$.
Here, again using above argument, we obtain the same
configuration.

   It is easy to see that these configurations are $G_2^{\{4\}}-$periodic and have the form

        \begin{equation}\label{3}
        \v^{(5)}_{i,j}(x)=\left\{
        \begin{array}{llll}
        i&,x \in H_0\cup H_3,\\
        j&,x \in H_1\cup H_2.
        \end{array}\right.
        \end{equation}
where $|i-j|=2.$

Indeed, let $b$ be any initial ball from
$M$ and $C_b\in H_0$ then one element of the set
$\{C_{{b}{\downarrow}}, x_b, y_b\}$ belongs to class $H_0$, one
element belongs to the class $H_1$ and another one element belongs
to the class $H_2$, i.e. $(\v^{(5)}_{i,j})_b\in C_5.$ By the similar
way, for $b\in M$, which $C_b\in H_l, l=1,2,3$ we can prove that
$(\v^{(5)}_{i,j})_b\in C_5.$
    Note that a number of the configurations $\v^{(5)}_{i,j}, |i-j|=2, i,j\in \Phi$ is two. 
    
       We reduce our tree as below:
  We have all configuration $\v (x) \in A_5$, correspondingly there exist $H_0$-periodic ground states.
    \end{proof}

Let $G_2^{(2)}=\{x\in G_2 : |x|\ \ \mbox {is even}\}.$ Notice that
$G_2^{(2)}$ is a normal subgroup of index two of $G_2$ (see
\cite{14}).

\begin{thm}\label{th4}
    Let $(\ab,\bb,\cb, J) \in A_6$, then there are only two $G_2^{(2)}-$periodic ground states.
\end{thm}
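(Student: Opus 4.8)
The plan is to mimic the structure of the earlier proofs (Theorems \ref{th2}, \ref{th1}, \ref{th}). First I would determine, for $(\ab,\bb,\cb,J)\in A_6$, which bounded configurations realize the minimal energy. Since $U_6=3\ab/2+3J$ and $A_6$ encodes $\cb\ge\ab$, $\bb\ge\ab$ together with $J\le\min\{(\cb-\ab)/4,(\bb-\ab)/4,(\bb+\cb-2\ab)/6\}$, one checks that the only bounded configurations $\s_b$ with $U(\s_b)=U_6$ are those where all five vertices of the ball carry values at mutual ``distance $2$'' in the sense of \eqref{cond}, i.e. all bonds $\langle x,y\rangle$ contribute $\ab$ and all next-nearest pairs $>x,y<$ contribute via $\d=1$ — concretely the configurations oscillating between the two labels $i,j$ with $|i-j|=2$, namely $\{1,3\}$. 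This parallels exactly the role of $\ab$ in $A_6$ versus the role of $\cb$ in $A_1$.

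Next I would construct the candidate ground states. The natural guess, following the pattern, is the $G_2^{(2)}$-periodic configuration
\[
\v^{(6)}_{i,j}(x)=\left\{
\begin{array}{ll}
i,& x\in G_2^{(2)},\\
j,& x\in G_2\setminus G_2^{(2)},
\end{array}\right.
\]
with $|i-j|=2$, of which there are exactly two ($\v_{1,3}$ and $\v_{3,1}$). I would then verify directly that for any unit ball $b\in M$ we have $(\v^{(6)}_{i,j})_b\in C_6$: if $C_b\in G_2^{(2)}$ then $C_{b\downarrow},x_b,y_b\in G_2\setminus G_2^{(2)}$ all carry the opposite label, so every nearest-neighbor bond in $b$ contributes $\ab$ and every next-nearest pair in $b$ is a monochromatic pair, giving $U=3\ab/2+3J=U_6$; the case $C_b\in G_2\setminus G_2^{(2)}$ is symmetric. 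Hence both $\v^{(6)}_{i,j}$ are ground states by Definition \ref{gs}.

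Finally I would argue these are the \emph{only} ground states. Take any ground state $\v$, so $\v_b\in C_6$ for all $b$. From the first step, on each ball the central vertex and all three of its remaining neighbors must take the label opposite (at distance $2$) to the center, and there are only two labels, $1$ and $3$, available throughout (the value $2$ can never appear, since any ball containing it fails to lie in $C_6$). This forces adjacent vertices to always have opposite labels, i.e. $\v$ is $2$-colored according to the parity of word-length — precisely the $G_2^{(2)}$-periodic structure — and the two choices of which label sits on $G_2^{(2)}$ give exactly $\v^{(6)}_{1,3}$ and $\v^{(6)}_{3,1}$. I expect the main obstacle to be the bookkeeping in the first step: carefully ruling out, using all three inequalities defining $A_6$ (not just $\cb\ge\ab$ and $\bb\ge\ab$), every bounded configuration other than the fully alternating $\{1,3\}$ ones — in particular configurations that trade a cheaper bond for a more expensive $\d$-term, which is exactly where the $J\le(\bb+\cb-2\ab)/6$ bound is needed.
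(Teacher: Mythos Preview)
Your proposal is correct and follows essentially the same route as the paper: identify the bounded configurations in $C_6$, build the two $G_2^{(2)}$-periodic configurations $\v^{(6)}_{i,j}$ with $|i-j|=2$, and verify that every ball lies in $C_6$. Your treatment is in fact more complete than the paper's own proof: the paper refers the identification of $C_6$ to a figure, checks that the two $\v^{(6)}_{i,j}$ are ground states, and simply observes that there are two such configurations --- it does \emph{not} write out the uniqueness argument you give in your last paragraph (though the paper does supply such a case analysis in the analogous proofs for $A_3$ and $A_5$).

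One small clarification on your final remark: the inequalities defining $A_6$ are what make $U_6$ the minimum among the $U_k$, but you do not need them to determine \emph{which} bounded configurations satisfy $U(\s_b)=U_6$. That identification is just the computation $3\ab/2+3J$ forces all three bonds to contribute $\ab$ and all three next-nearest pairs to contribute $\d=1$, hence center in $\{1,3\}$ and all three neighbors equal to the other value --- no inequality on $J$ is used there. Where the inequalities matter is only on the boundary of $A_6$, where some other $U_k$ could coincide with $U_6$ and additional ground states could in principle appear; the paper, like you, is tacitly working in the interior.
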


\begin{proof}

    Let $(\ab,\bb,\cb,J) \in A_6$, then one can see that for this triple, the minimal value is $(3\ab)/2+3J$, which is achieved by the configurations on $b$ given in Figure \ref{case6}.
\begin{figure}[h!]
    \begin{center}
        \includegraphics[width=5.5cm]{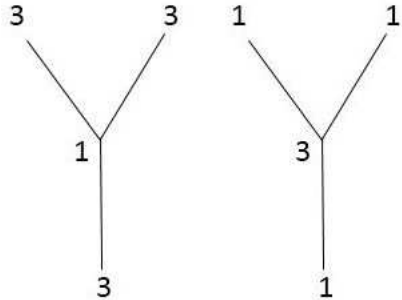}
    \end{center}
    \caption{Configurations for $A_6$} \label{case6}
\end{figure}

Let us consider the quotient group $G_2/ G_2^{(2)}=\{H_0, H_1\},$ where
$$
H_0=G_2^{(2)}=\{x\in G_2 : |x|\ \ \mbox {is even}\},\ \ H_1=\{x\in
G_2 : |x|\ \ \mbox {is odd}\}.
$$

Using configuration given by Figure \ref{case6}, one can construct
configuration define by:

        \begin{equation}\label{5}
        \v_{i,j}^{(6)}(x)=\left\{
        \begin{array}{ll}
        i &,x \in H_0,\\
        j &,x \in H_1.
        \end{array}\right.
        \end{equation}
        where $|i-j|=2$ and $ i,j\in \Phi$.

Configurations $\v_{i,j}^{(6)}(x)$ are ground states. Indeed, let
$b$ be any initial ball from $M$ and $C_b\in H_0$ then
$C_{{b}{\downarrow}}, x_b, y_b$ belongs to class $H_1$, i.e.
$(\v^{(5)}_{i,j})_b\in C_5.$ If $C_b\in H_1$ then
$C_{{b}{\downarrow}}, x_b, y_b$ belongs to class $H_0$, i.e.
$(\v^{(5)}_{i,j})_b\in C_5.$

    Note that  a number of the configurations $\v^{(5)}_{i,j}, |i-j|=2, i,j\in \Phi$ is two.
    Theorem is proved.
\end{proof}

\begin{thm}\label{th33}
    Let $(\ab,\bb,\cb, J) \in A_7$, then there is not any ground states.
\end{thm}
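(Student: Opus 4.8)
The plan is to follow the template of the proof of Theorem~\ref{th3} (the analogous statement for $A_4$): first read off the structure of the class $C_7$ from the value $U_7=3\bb/2+J$, and then observe that this class cannot be propagated consistently across two overlapping balls, so that no configuration $\v$ can satisfy $\v_b\in C_7$ for every $b\in M$.

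First I would describe $C_7$. A ball $b$ of $\tau^2$ has a centre $C_b$, three outer vertices, three edges and three next-nearest-neighbour pairs (those joining the outer vertices), so $U(\s_b)=\tfrac12\big(\l(\s(C_b),\s(x_b))+\l(\s(C_b),\s(y_b))+\l(\s(C_b),\s(\Cbd))\big)+J\cdot\#\{\text{coinciding outer pairs}\}$. Since $\bb$ is the value of $\l$ on an edge whose endpoints differ by $1$, matching $U(\s_b)=3\bb/2+J$ forces all three edges incident to $C_b$ to have spin-difference $1$ and exactly one of the three outer pairs to coincide. A vertex at spin-difference $1$ from $C_b$ equals $\s(C_b)\pm 1$; if $\s(C_b)\in\{1,3\}$ this would make all three outer spins equal to $2$, hence three coinciding pairs and the value $3\bb/2+3J\ne U_7$. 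Therefore every $\s_b\in C_7$ has $\s(C_b)=2$ and the three outer spins in $\{1,3\}$ (multiset $\{1,1,3\}$ or $\{1,3,3\}$); in the earlier notation, $\v_b(C_b)=2$, $B^{(2)}=0$ and $B^{(1)}+B^{(3)}=3$. In particular a ball of class $C_7$ has centre spin $2$ and at least one outer vertex with spin in $\{1,3\}$.

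With this, the conclusion is immediate. Suppose $\v$ were a ground state for $(\ab,\bb,\cb,J)\in A_7$, so $\v_b\in C_7$ for all $b\in M$. Take any ball $b$ with centre $c$; then $\v(c)=2$ and some neighbour $y\in S_1(c)$ has $\v(y)\in\{1,3\}$. Applying the description of $C_7$ to the ball $b'\in M$ centred at $y$ forces $\v(y)=2$, a contradiction. Hence no ground state exists.

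The only step demanding genuine care is the first one: the assertion that $U(\s_b)=U_7$ is attainable only with centre spin $2$ should be justified by an exhaustive finite check over the $4$-vertex ball with spins in $\{1,2,3\}$, organised by the value of $\s(C_b)$, in which the inequalities $\ab\ge\bb$, $\cb\ge\bb$, $0\le J$ and $2J\le\cb-\bb$ make every competing expression (the closest being $U_8=(2\bb+\cb)/2$ and $U_{10}=(\ab+\bb+\cb)/2$) strictly larger; the boundary faces $J=0$ and $2J=\cb-\bb$ deserve a separate word (on $J=0$ the constant-$2$ configuration is itself a ground state, so the statement implicitly presumes $J>0$). Everything after that is the routine two-ball incompatibility argument, identical in spirit to the $A_4$ case, so I foresee no further obstacle.
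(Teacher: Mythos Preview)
Your argument is correct and follows exactly the two-ball incompatibility pattern of the paper's proof: both identify that $\s_b\in C_7$ forces $\s(C_b)=2$, hence every neighbour of the centre has spin in $\{1,3\}$, and applying the same constraint to the ball centred at that neighbour gives the contradiction. Your justification of why the centre must equal $2$ is in fact more detailed than the paper's, which simply asserts it.

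One small slip in your closing parenthetical: on the face $J=0$ the constant-$2$ configuration has ball energy $3\cb/2\ge 3\bb/2=U_7$, so it is not a ground state unless additionally $\cb=\bb$. The genuine counterexamples on $J=0$ are the level-alternating configurations of class $C_{11}$ (since $U_{11}=3\bb/2+3J=U_7$ there), which still supports your observation that the statement tacitly excludes that boundary.
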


\begin{proof}
Let $(\ab,\bb,\cb, J) \in A_7$, then one can see that for this triple, the minimal value is $3\ab/2+J$.
Let $\v$ be a ground state, i.e. for any $b\in
M$, one has $\v_b(x)\in C_7$. Then
$\v(C_b)=2$, since $U(\v_b)=3\ab/2+J$. From $\v_b(x)\in C_7$ we conclude that all of the following variables
$$|\v(C_b)-\v(C_{{b}{\downarrow}})|, \  |\v(C_b)-\v(x_b)|, \ 
|\v(C_b)-\v(y_b)|$$ must be equal to 1, i.e. for example $\v(x_b)$ is not
equal to two.  If we consider of unit ball $b'$ with center $x_b$,
then $U(\v_b')\neq (3\ab)/2+J$. Consequently there is not any ground state. \end{proof}

\begin{thm}\label{th4}
    Let $(\ab,\bb,\cb, J) \in A_l, l=8, 9, 10$, then there is not ground states.
\end{thm}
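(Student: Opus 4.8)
The plan is to argue by contradiction, in the spirit of Theorems~\ref{th3} and~\ref{th33}. Fix $(\ab,\bb,\cb,J)\in A_l$; by \eqref{eq15} the value $U_l$ is the minimum of $U_1,\dots,U_{12}$, and we work at the generic such parameters where this minimum is strict (see the closing remark). Suppose $\v$ is a ground state. By Definition~\ref{gs} together with the Lemma giving $H(\s,\v)=\sum_{b\in M}(U(\s_b)-U(\v_b))$, this forces $\v_b\in C_l$ for every unit ball $b\in M$, where $C_l=\{\s_b\in\Om_b:\ U(\s_b)=U_l\}$. I will use two properties of $C_l$, both read off from the same enumeration of the $3^4=81$ bounded configurations — classified by the centre spin $\v(C_b)$ and the multiset of neighbour spins $\{\v(C_{b_{\downarrow}}),\v(x_b),\v(y_b)\}$ — that produced \eqref{U_n}. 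First, the value $U_l$ is realized only by bounded configurations whose centre spin lies in a proper subset $P_l\subsetneq\{1,2,3\}$: one has $P_8=\{2\}$ (only a spin-$2$ centre yields $U_8$), while $P_9=P_{10}=\{1,3\}$ (a spin-$2$ centre yields one of $U_1,U_2,U_7,U_8,U_{11},U_{12}$, never $U_9$ or $U_{10}$). Second, every $\s_b\in C_l$ has a neighbour of its centre whose spin lies outside $P_l$: for $l=8$ the neighbour spins form $\{1,2,3\}$ and $1\notin P_8$; for $l=9$ they form $\{2,3,3\}$ when $\v(C_b)=1$ and $\{1,1,2\}$ when $\v(C_b)=3$, and for $l=10$ they form $\{1,2,3\}$, in each case containing $2\notin P_l$.

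Granting these, the contradiction is immediate. Every vertex $x\in V$ is the centre of the unit ball $\{x\}\cup S_1(x)\in M$, and $\v_b\in C_l$ for all $b$, so the first property gives $\v(x)\in P_l$ for every $x\in V$. On the other hand, taking any ball $b$, the second property provides a vertex $y\in S_1(C_b)$ with $\v(y)\notin P_l$, contradicting what we just established. Hence no ground state exists, which is the assertion for $l=8,9,10$. For $l=8$ one can phrase the end even more simply: $P_8=\{2\}$ forces $\v\equiv 2$, so every ball carries the constant configuration, whose energy is $U_1=3\cb/2+3J\ne U_8$.

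I expect the only genuine work — and the only delicate point — to be the first step: determining $C_l$ exactly. This is the finite computation behind the list \eqref{U_n} and behind the configuration pictures in the preceding theorems, and it is delicate precisely because on the measure-zero subsets of $A_l$ where two of the $U_m$ coincide, extra bounded configurations can enter $C_l$ and enlarge $P_l$ to all of $\{1,2,3\}$ — for instance along $\{\ab=\bb\}\cap A_9$, where $U_7=U_9$. This is why, as for Theorems~\ref{th3} and~\ref{th33}, the statement is to be understood at parameters where $U_l$ is the strict minimum. Once $C_l$ is in hand, the rest is just the tiling obstruction above: the local energy minimizers confine every spin to a proper subset $P_l$ of the spin space while insisting on a neighbour outside $P_l$, and on a Cayley tree every neighbour is again the centre of a ball.
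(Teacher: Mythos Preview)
Your argument is correct and follows exactly the approach of the paper, which merely says the proof is ``similar to proof of the Theorem~\ref{th33}'' without spelling out the cases. You have supplied the missing verification of what $C_l$ looks like for $l=8,9,10$ (centres confined to $P_8=\{2\}$, $P_9=P_{10}=\{1,3\}$, with a forced neighbour outside $P_l$), and the tiling obstruction you describe is precisely the mechanism used in Theorem~\ref{th33}. Your closing caveat about boundary points of $A_l$ where the minimum is not strict is more careful than the paper itself, which tacitly works at generic parameters throughout this section.
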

\begin{proof}

The proof of this theorem is similar to proof of the Theorem
    \ref{th33}.
\end{proof}

\begin{thm}\label{th4}
    Let $(\ab,\bb,\cb, J) \in A_{11}$, then there are four $G_2^{(2)}-$periodic ground states.
\end{thm}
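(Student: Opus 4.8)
The plan is to write the four candidate configurations explicitly, check ball by ball that each one realizes the minimal local energy, and then verify that no other $G_2^{(2)}$-periodic configuration can do so.

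First I would record that on $A_{11}$ we have $\cb\ge\bb$, $\ab\ge\bb$ and $J\le 0$, so by \eqref{U_n} the minimum of $\{U_1,\dots,U_{12}\}$ equals $U_{11}=3\bb/2+3J$ (in particular $U_{11}\le U_7=3\bb/2+J$ exactly because $J\le 0$). Next I would describe the class $C_{11}=\{\s_b\in\Om_b:\ U(\s_b)=U_{11}\}$. A ball $b\in M$ has center $C_b$, parent $\Cbd$ and successors $x_b,y_b$; its three edges are $<C_b,\Cbd>,\ <C_b,x_b>,\ <C_b,y_b>$ and its three next-nearest-neighbor pairs are the three pairs among $\Cbd,x_b,y_b$. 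Since $U(\s_b)$ is one half of the sum of the three edge $\l$-values plus $J$ times the number of coinciding next-nearest-neighbor pairs, and $J\le 0$, the equality $U(\s_b)=3\bb/2+3J$ forces every edge to contribute $\bb$ and all three nnn pairs to coincide; that is, $\s(\Cbd)=\s(x_b)=\s(y_b)=:v$ and $|\s(C_b)-v|=1$. Because $\s$ takes values in $\Phi=\{1,2,3\}$, the bounded configurations in $C_{11}$ are exactly the four with $(C_b;x_b,y_b,\Cbd)$ equal to $(1;2,2,2)$, $(3;2,2,2)$, $(2;1,1,1)$ or $(2;3,3,3)$.

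Then I would pass to $G_2/G_2^{(2)}=\{H_0,H_1\}$, with $H_0=\{x\in G_2:|x|\text{ even}\}$, $H_1=\{x\in G_2:|x|\text{ odd}\}$, and for $i,j\in\Phi$ with $|i-j|=1$ define
\[
\v^{(11)}_{i,j}(x)=\left\{\begin{array}{ll} i,& x\in H_0,\\ j,& x\in H_1.\end{array}\right.
\]
Adjacency in $\tau^2$ flips the parity of the word length, so in every ball the center $C_b$ lies in one of $H_0,H_1$ while all of $\Cbd,x_b,y_b$ lie in the other; hence each ball of $\v^{(11)}_{i,j}$ has the local pattern (center $i$, the three neighbors all $j$) or (center $j$, the three neighbors all $i$), and by the previous paragraph this lies in $C_{11}$ precisely because $|i-j|=1$. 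By the lemma $H(\s,\v)=\sum_{b\in M}(U(\s_b)-U(\v_b))$ together with Definition~\ref{gs}, each $\v^{(11)}_{i,j}$ with $|i-j|=1$ is a ground state. Conversely, an arbitrary $G_2^{(2)}$-periodic configuration takes some value $i$ on $H_0$ and some value $j$ on $H_1$, and in each of its balls the three neighbors of the center carry the same value; so for it to be a ground state the characterization of $C_{11}$ forces $|i-j|=1$, i.e. it is one of the $\v^{(11)}_{i,j}$. The ordered pairs $(i,j)\in\Phi^2$ with $|i-j|=1$ are $(1,2),(2,1),(2,3),(3,2)$, which give four pairwise distinct configurations (they already differ at the root $e\in H_0$ or at a neighbor of $e$ in $H_1$). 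Hence there are exactly four $G_2^{(2)}$-periodic ground states.

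I do not expect a serious obstacle: the one step to carry out carefully is the parity bookkeeping showing that inside every unit ball the center and its three neighbors lie in opposite $G_2^{(2)}$-cosets, since this is precisely what reduces ``every ball lies in $C_{11}$'' to the single condition $|i-j|=1$. (On the exceptional strata $\cb=\bb$, $\ab=\bb$ or $J=0$ the value $U_{11}$ ties with $U_1$, $U_6$ or $U_7$ respectively, and one should check separately that no extra $G_2^{(2)}$-periodic ground state appears there; for instance a constant configuration, which would require $|i-j|=0$, is excluded as soon as $\cb>\bb$, so for generic parameters in $A_{11}$ the count is exactly four.)
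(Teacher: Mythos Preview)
Your argument is correct and follows the same route as the paper: define the four configurations $\v^{(11)}_{i,j}$ with $|i-j|=1$ on the cosets of $G_2^{(2)}$, use that adjacency flips parity of word length so in every ball the center and its three neighbors lie in opposite cosets, and conclude that each ball realizes $U_{11}$. The paper's proof stops at the existence part (verifying $(\v^{(11)}_{i,j})_b\in C_{11}$ for all $b$ and counting the four pairs $(i,j)$); you additionally supply the explicit characterization of $C_{11}$ and the converse showing no other $G_2^{(2)}$-periodic configuration works, which the paper omits. Your caveat about boundary strata is also something the paper does not discuss.
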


\begin{proof}

    Let $(\ab,\bb,\cb,J) \in A_6$, then one can see that for this triple, the minimal value is $(3\bb)/2+3J$.

Consider the quotient group $G_2/ G_2^{(2)}=\{H_0, H_1\},$ where
$$
H_0=G_2^{(2)}=\{x\in G_2 : |x|\ \ \mbox {is even}\},\ \ H_1=\{x\in
G_2 : |x|\ \ \mbox {is odd}\}.
$$

Let
        \begin{equation}\label{5}
        \v_{i,j}^{(11)}(x)=\left\{
        \begin{array}{ll}
        i &,x \in H_0,\\
        j &,x \in H_1.
        \end{array}\right.
        \end{equation}
        where $|i-j|=1$ and $ i,j\in \Phi$.

Configurations $\v_{i,j}^{(11)}(x)$ are ground states. Really, let
$b$ is any initial ball from $M$ and $C_b\in H_0$ then
$C_{{b}{\downarrow}}, x_b, y_b$ are belong to class $H_1$, i.e.
$(\v^{(11)}_{i,j})_b\in C_{11}.$ If $C_b\in H_1$ then
$C_{{b}{\downarrow}}, x_b, y_b$ belong to class $H_0$, i.e.
$(\v^{(11)}_{i,j})_b\in C_{11}.$

    Note that number of the configurations $\v^{(11)}_{i,j}, |i-j|=1, i,j\in \Phi$ is four.
    Theorem is proved.

\end{proof}

Let  $G_2^{(4)}=\{x \in G_2: |x|-even,\ w_x(a_1)-even\}$.
Note that $G_2^{(4)}$ is a normal subgroup of index four.
We consider the following quotient group $G_2/G_2^{(4)}=\{H_0, H_1,
H_2, H_4\},$ where
$$
\begin{array}{llll}
           H_0=G_2^{(4)}\\
            H_1=\{x \in G_2:|x|-even,\ w_x(a_1)-odd\},\\
            H_2=\{x \in G_2:|x|-odd,\ w_x(a_1)-even\},\\
            H_3=\{x \in G_2:|x|-odd,\ w_x(a_1)-odd\}.
        \end{array}$$

\begin{thm}\label{th}
    Let $(\ab,\bb,\cb) \in A_{12}$, then the following statements hold.
    \begin{itemize}
        \item[(i)] there is uncountable number of ground states;
        \item[(ii)] there exist four $G_2^{(4)}-$periodic ground states.
    \end{itemize}
\end{thm}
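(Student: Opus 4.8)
The overall strategy would be the same reduction to a combinatorial edge condition that was used in the preceding cases. Since $(\ab,\bb,\cb,J)\in A_{12}$ forces $J=0$, $\cb=\bb$ and $\ab\ge\bb$, a glance at \eqref{U_n} gives $\min_{1\le k\le 12}U_k=\frac{3\bb}{2}$. Moreover, for a bounded configuration $\s_b$ with centre $C_b$ and neighbours $C_{b_{\downarrow}},x_b,y_b$ one has, because $J=0$,
\[
U(\s_b)=\frac12\Bigl(\l\bigl(\s(C_b),\s(C_{b_{\downarrow}})\bigr)+\l\bigl(\s(C_b),\s(x_b)\bigr)+\l\bigl(\s(C_b),\s(y_b)\bigr)\Bigr),
\]
and each value $\l(i,j)$ lies in $\{\ab,\bb,\cb\}=\{\ab,\bb\}$ with $\ab\ge\bb$; hence $U(\s_b)\ge\frac{3\bb}{2}$, with equality whenever no edge of $b$ joins two spins differing by $2$. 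So the first step is the reformulation: every configuration $\v\colon V\to\{1,2,3\}$ such that no edge of $\tau^2$ joins a vertex of spin $1$ to one of spin $3$ is a ground state. Parts (i) and (ii) then amount to exhibiting enough such configurations.

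For (i) the plan is to produce the family explicitly: put $\v(x)=2$ for all $x\in\bigcup_{n\ \textrm{even}}W_n$ and let $\v(x)\in\{1,3\}$ be chosen arbitrarily for each $x\in\bigcup_{n\ \textrm{odd}}W_n$. Every edge of $\tau^2$ joins some $W_n$ to $W_{n+1}$, hence has an endpoint of spin $2$, and $|2-1|=|2-3|=1$, so no edge joins $1$ to $3$; by the reformulation each such $\v$ is a ground state. As $\bigcup_{n\ \textrm{odd}}W_n$ is countably infinite and the choices at its vertices are independent, this yields $2^{\aleph_0}$ pairwise distinct ground states, proving (i).

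For (ii) I would first determine which of the cosets $G_2/G_2^{(4)}=\{H_0,H_1,H_2,H_3\}$ are joined by edges of $\tau^2$. The maps $x\mapsto|x|\bmod 2$ and $x\mapsto w_x(a_1)\bmod 2$ are group homomorphisms $G_2\to\bz/2\bz$ (each factors through the abelianization of $G_2$, isomorphic to $(\bz/2\bz)^3$), so $\phi(x)=\bigl(|x|,w_x(a_1)\bigr)\bmod 2$ is a homomorphism $G_2\to(\bz/2\bz)^2$ with $\ker\phi=G_2^{(4)}$ and $\phi(a_1)=(1,1)$, $\phi(a_2)=\phi(a_3)=(1,0)$. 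Identifying $H_0,H_1,H_2,H_3$ with $(0,0),(0,1),(1,0),(1,1)$, a step along an $a_1$-edge changes the coset by $(1,1)$ and a step along an $a_2$- or $a_3$-edge changes it by $(1,0)$; hence each vertex of $H_0$ has exactly one neighbour in $H_3$ and two in $H_2$, each vertex of $H_1$ exactly one neighbour in $H_2$ and two in $H_3$, and symmetrically for $H_2,H_3$. I would then take the four $G_2^{(4)}$-periodic configurations $\psi_1,\psi_2,\psi_3,\psi_4$ equal, on $(H_0,H_1,H_2,H_3)$, to $(2,2,1,3)$, $(2,2,3,1)$, $(1,3,2,2)$ and $(3,1,2,2)$ respectively. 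For each $\psi_m$, the adjacency pattern above shows that on every unit ball either its centre has spin $2$ and all three neighbours spins in $\{1,3\}$, or its centre has a spin in $\{1,3\}$ and all three neighbours spin $2$; in either case no edge joins $1$ to $3$, so $U((\psi_m)_b)=\frac{3\bb}{2}$ for every ball and $\psi_m$ is a ground state. The $\psi_m$ are pairwise distinct, constant on $H_0,H_1,H_2,H_3$ (hence $G_2^{(4)}$-periodic), and not constant on the set of odd-length words (hence not $G_2^{(2)}$-periodic), which proves (ii).

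The only delicate point is the coset bookkeeping in (ii): one must identify correctly which pairs $H_i,H_j$ are adjacent in $\tau^2$, and the homomorphism $\phi$ makes this routine; after that, the check for each of the four configurations is the familiar one-ball computation from the earlier theorems. If one wanted the exact number of $G_2^{(4)}$-periodic ground states rather than merely four of them, the remaining task would be to enumerate the quadruples $(\s_0,\s_1,\s_2,\s_3)\in\{1,2,3\}^4$ with $\{\s_i,\s_j\}\neq\{1,3\}$ for each adjacent pair $H_i\sim H_j$, but the statement only claims existence.
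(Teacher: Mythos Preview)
Your argument is correct, but both parts proceed differently from the paper.

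For (i), the paper simply invokes the inductive branching construction used in the $A_2$ case (Theorem~\ref{th2}(i)), building ground states ball by ball through case distinctions on $\s(C_{b_\downarrow}),\s(C_b)$. Your construction---spin $2$ on even levels, arbitrary spins in $\{1,3\}$ on odd levels---is more direct and transparent; it exploits the collapse $\cb=\bb$, $J=0$ to reduce the ground-state condition to the single combinatorial requirement ``no $1$--$3$ edge'', after which uncountability is immediate.

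For (ii), the paper exhibits the four configurations $\v_{i,j}^{(12)}$ that are constant on $H_0\cup H_3$ and on $H_1\cup H_2$ with $|i-j|=1$; these lie in the class $C_{12}$ (each ball has two $\bb$-edges and one $\cb$-edge). Your four configurations $\psi_m$ are genuinely different: every edge has an endpoint of spin $2$, so every ball has three $\bb$-edges and your $\psi_m$ lie in $C_{7}/C_{11}$ rather than $C_{12}$. Both families are legitimate ground states because on $A_{12}$ the minima $U_1=U_2=U_7=U_8=U_{11}=U_{12}=\tfrac{3\bb}{2}$ all coincide. The paper's choice has the virtue of matching the label $A_{12}$ literally (each $\v_b\in C_{12}$), while yours makes the verification shorter via the ``no $1$--$3$ edge'' criterion. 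Your homomorphism $\phi$ gives a cleaner derivation of the coset adjacencies than the paper's case-by-case count of $|\{C_{b_\downarrow},x_b,y_b\}\cap H_i|$.
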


\begin{proof}
$(i)$. Proof of this statement is similar to proof of statement $(i)$ of the Theorem \ref{th2}.

$(ii)$. Let

        \begin{equation}\label{12}
        \v_{i,j}^{(12)}(x)=\left\{
        \begin{array}{ll}
        i, \ \mbox{if} \ x \in H_0\cup H_3,\\
        j, \ \mbox{if} \ x \in H_1\cup H_2,
        \end{array}\right.
        \end{equation}
        be the $G_2^{(4)}-$periodic configuration, where $|i-j|=1$.
        We shall prove that the periodic configuration $\v_{i,j}^{(12)}(x)$ is a periodic ground states. Let $b\in M$ is arbitrary unit ball and $C_b\in H_0$, then it is easy to see that $|\{C_{{b}{\downarrow}}, x_b,y_b\}\cap H_0|=0$, $|\{C_{{b}{\downarrow}}, x_b,y_b\}\cap H_1|=0$, $|\{C_{{b}{\downarrow}}, x_b,y_b\}\cap H_2|=2$ and $|\{C_{{b}{\downarrow}}, x_b,y_b\}\cap H_3|=1$. In this case by \ref{12} may be the following:\\
        1) $\v_{i,j}^{(12)}(C_b)=i, \v_{i,j}^{(12)}(C_{{b}{\downarrow}})=i, \v_{i,j}^{(12)}(x_b)=j, \v_{i,j}^{(12)}(y_b)=j$;\\
        2) $\v_{i,j}^{(12)}(C_b)=i, \v_{i,j}^{(12)}(C_{{b}{\downarrow}})=j, \v_{i,j}^{(12)}(x_b)=i, \v_{i,j}^{(12)}(y_b)=j$;\\
        3) $\v_{i,j}^{(12)}(C_b)=i, \v_{i,j}^{(12)}(C_{{b}{\downarrow}})=j, \v_{i,j}^{(12)}(x_b)=j, \v_{i,j}^{(12)}(y_b)=i$;\\
        In all cases $U((\v_{i,j}^{(12)}(x))_b)=(2\bb+\cb)/2+J$.

      If $C_b\in H_1$, then it is easy to see that $|\{C_{{b}{\downarrow}}, x_b,y_b\}\cap H_0|=0$, $|\{C_{{b}{\downarrow}}, x_b,y_b\}\cap H_1|=0$, $|\{C_{{b}{\downarrow}}, x_b,y_b\}\cap H_2|=1$ and $|\{C_{{b}{\downarrow}}, x_b,y_b\}\cap H_3|=2$. In this case by \ref{12} may be the following:\\
        1) $\v_{i,j}^{(12)}(C_b)=j, \v_{i,j}^{(12)}(C_{{b}{\downarrow}})=j, \v_{i,j}^{(12)}(x_b)=i, \v_{i,j}^{(12)}(y_b)=i$;\\
        2) $\v_{i,j}^{(12)}(C_b)=j, \v_{i,j}^{(12)}(C_{{b}{\downarrow}})=i, \v_{i,j}^{(12)}(x_b)=i, \v_{i,j}^{(12)}(y_b)=j$;\\
        3) $\v_{i,j}^{(12)}(C_b)=j, \v_{i,j}^{(12)}(C_{{b}{\downarrow}})=i, \v_{i,j}^{(12)}(x_b)=j, \v_{i,j}^{(12)}(y_b)=i$;\\
        In all cases $U((\v_{i,j}^{(12)}(x))_b)=(2\bb+\cb)/2+J$.

        If $C_b\in H_2$, then it is easy to see that $|\{C_{{b}{\downarrow}}, x_b,y_b\}\cap H_0|=2$, $|\{C_{{b}{\downarrow}}, x_b,y_b\}\cap H_1|=1$, $|\{C_{{b}{\downarrow}}, x_b,y_b\}\cap H_2|=0$ and $|\{C_{{b}{\downarrow}}, x_b,y_b\}\cap H_3|=0$. In this case by \ref{12} may be the following:\\
        1) $\v_{i,j}^{(12)}(C_b)=j, \v_{i,j}^{(12)}(C_{{b}{\downarrow}})=j, \v_{i,j}^{(12)}(x_b)=i, \v_{i,j}^{(12)}(y_b)=i$;\\
        2) $\v_{i,j}^{(12)}(C_b)=j, \v_{i,j}^{(12)}(C_{{b}{\downarrow}})=i, \v_{i,j}^{(12)}(x_b)=i, \v_{i,j}^{(12)}(y_b)=j$;\\
        3) $\v_{i,j}^{(12)}(C_b)=j, \v_{i,j}^{(12)}(C_{{b}{\downarrow}})=i, \v_{i,j}^{(12)}(x_b)=j, \v_{i,j}^{(12)}(y_b)=i$;\\
        In all cases $U((\v_{i,j}^{(12)}(x))_b)=(2\bb+\cb)/2+J$.

        If $C_b\in H_3$, then it is easy to see that $|\{C_{{b}{\downarrow}}, x_b,y_b\}\cap H_0|=1$, $|\{C_{{b}{\downarrow}}, x_b,y_b\}\cap H_1|=2$, $|\{C_{{b}{\downarrow}}, x_b,y_b\}\cap H_2|=0$ and $|\{C_{{b}{\downarrow}}, x_b,y_b\}\cap H_3|=0$. In this case by \ref{12} may be the following:\\
        1) $\v_{i,j}^{(12)}(C_b)=i, \v_{i,j}^{(12)}(C_{{b}{\downarrow}})=i, \v_{i,j}^{(12)}(x_b)=j, \v_{i,j}^{(12)}(y_b)=j$;\\
        2) $\v_{i,j}^{(12)}(C_b)=i, \v_{i,j}^{(12)}(C_{{b}{\downarrow}})=j, \v_{i,j}^{(12)}(x_b)=i, \v_{i,j}^{(12)}(y_b)=j$;\\
        3) $\v_{i,j}^{(12)}(C_b)=i, \v_{i,j}^{(12)}(C_{{b}{\downarrow}})=j, \v_{i,j}^{(12)}(x_b)=j, \v_{i,j}^{(12)}(y_b)=i$;\\

        In all cases $U((\v_{i,j}^{(12)}(x))_b)=(2\bb+\cb)/2+J$, i.e. $(\v_{i,j}^{(12)})_b \in C_{12}, \forall b\in M$  consequently periodic configuration $\v_{i,j}^{(12)}(x)$ is ground states on the set $A_{12}.$

\end{proof}

\section{Gibbs measures of the $\lambda$-model with competing Potts interactions}\label{Gibbs measures}

In this section, we define a notion of Gibbs measure corresponding to
the  $\lambda$ model with competing Potts interactions on an arbitrary order Cayley tree. We
propose a new kind of construction of Gibbs measures corresponding
to the model.

Below, for the sake of simplicity, we will consider a semi-infinite
Cayley tree $\tau_+^k$ of order $k$, i.e. an infinite graph
without cycles with $k+1$ edges issuing from each vertex except for
$x^0$ which has only $k$ edges.

In what follows, for the sake of simplicity of calculations, we consider the model where the spin takes values in the set
$\Phi=\{\eta_1,\eta_2,\eta_3\}$. 
Here $\eta_1,\eta_2,\eta_3$ are vectors in $\br^2$ such that 
$$
\eta_i\cdot\eta_j=
\left\{
\begin{array}{ll}
\ \ 1, \ \ \ i=j\\[2mm]
-\frac{1}{2}, \ \ i\neq j\\
\end{array}
\right.
$$
We racall that the set of configurations on $V$(resp.  $V_n$ and $W_n$) coincides with
$\Omega=\Phi^{V}$ (resp. $\Omega_{V_n}=\Phi^{V_n},\ \
\Omega_{W_n}=\Phi^{W_n}$). One can see that
$\Om_{V_n}=\Om_{V_{n-1}}\times\Om_{W_n}$. Using this, for given
configurations $\s_{n-1}\in\Om_{V_{n-1}}$ and $\w\in\Om_{W_{n}}$ we
define their concatenations  by
$$
(\s_{n-1}\vee\w)(x)= \left\{
\begin{array}{ll}
\s_{n-1}(x), \ \ \textrm{if} \ \  x\in V_{n-1},\\
\w(x), \ \ \ \ \ \ \textrm{if} \ \ x\in W_n.\\
\end{array}
\right.
$$
It is clear that $\s_{n-1}\vee\w\in \Om_{V_n}$.

In this section, for the sake of simplicity,  the $\lambda$ model with competing Potts interactions is given by the following Hamiltonian 
\begin{equation}\label{ham}
H(\sigma)=-J\sum_{\widetilde{>x,y<}}\delta_{\sigma(x)\sigma(y)} -
\sum_{<x,y>}\lambda(\sigma(x)\sigma(y)),
\end{equation}

Assume that  $\h: (V\setminus\{x^0\})\times
(V\setminus\{x^0\})\times \Phi\times\Phi\to \mathbb{R}^9$ is a
mapping, i.e.
$$\h_{xy,uv}=\bigg(h_{xy,\eta_1\eta_1},h_{xy,\eta_1\eta_2},h_{xy,\eta_1\eta_3},h_{xy,\eta_2\eta_1},h_{xy,\eta_2\eta_2},h_{xy,\eta_2\eta_3},h_{xy,\eta_3\eta_1},h_{xy,\eta_3\eta_2},h_{xy,\eta_3\eta_3}\bigg),$$
where $h_{xy,uv}\in \mathbb{R}$, $u,v\in\Phi$, and $x,y\in
V\setminus\{x^{(0)}\}$.

Now, we define the Gibbs measure with memory of length 2 on the
Cayley tree as follows:
\begin{equation}\label{mu}
\m^{(n)}_{\h}(\s)=\frac{1}{Z_{n}}\exp[-\beta H_n(\s)+\sum_{x\in
    W_{n-1}}\sum_{y\in
    S(x)}\sigma(x)\sigma(y)\h_{xy,\sigma(x)\sigma(y)}].
\end{equation}

Here, $\beta=\frac{1}{kT}$, $\sigma\in\Omega_{V_n}$ and $Z_n$ is the
corresponding to partition function
\begin{equation}\label{Zn}
Z_{n}=\sum\limits_{\sigma_n\in \Omega_{V_n}}\exp[-\beta
H(\s_n)+\sum_{x\in W_{n-1}}\sum_{y\in
    S(x)}\sigma(x)\sigma(y)\h_{xy,\sigma(x)\sigma(y)}].
\end{equation}

In order to construct an infinite volume distribution with given finite-dimensional distributions, we would like to find a probability measure $\m$ on
$\Om$ with given conditional probabilities $\m_{\h}^{(n)}$, i.e.
\begin{equation}\label{CM}
\m(\s\in\Om: \s|_{V_n}=\s_n)=\m^{(n)}_{\h}(\s_n), \ \ \
\textrm{for all} \ \ \s_n\in\Om_{V_n}, \ n\in\bn.
\end{equation}
If the measures $\{\m^{(n)}_{\h}\}$ are \textit{compatible}, i.e.
\begin{equation}\label{comp}
\sum_{\w\in\Om_{W_n}}\m^{(n)}_{\h}(\s\vee\w)=\m^{(n-1)}_{\h}(\s), \
\ \ \textrm{for any} \ \ \s\in\Om_{V_{n-1}},
\end{equation}
then according to the Kolmogorov's theorem there exists a unique
measure $\m_{\h}$ defined on $\Om$ with a required condition
\eqref{CM}. Such a measure $\m_{\h}$ is said to be {Gibbs measure}
corresponding to the model. Note that a general theory of Gibbs
measures has been developed in \cite{1,8}.

The next statement describes the conditions on the boundary fields
$\h$ guaranteeing the compatibility of the distributions
$\{\m^{(n)}_\h\}$ .

\begin{thm}\label{theorem1}
    The measures  $\m^{(n)}_\h$, $n=1,2,...,$ in \eqref{mu} are
    compatible iff for any $x,y\in V$ the following equations hold:
    \begin{equation}\label{necessary}
    \left\{
    \begin{array}{ll}
    e^{-\frac{1}{2} h_{xy,\e_1\e_2}-h_{xy,\e_1\e_1}}=\prod\limits_{z\in
        S(y)}\frac{\exp[-\frac{1}{2}\h_{yz,\e_2\e_1}]bd+\exp[\h_{yz,\e_2\e_2}]c+\exp[-\frac{1}{2}\h_{yz,\e_2\e_3}]b}
    {\exp[\h_{yz,\e_1\e_1}]cd+\exp[-\frac{1}{2}\h_{yz,\e_1\e_2}]b+\exp[-\frac{1}{2}\h_{yz,\e_1\e_3}]a}\\[2mm]
    e^{-\frac{1}{2} h_{xy,\e_1\e_3}-h_{xy,\e_1\e_1}}=\prod\limits_{z\in
        S(y)}\frac{\exp[-\frac{1}{2}\h_{yz,\e_3\e_1}]ad+\exp[-\frac{1}{2}\h_{yz,\e_3\e_2}]b+\exp[\h_{yz,\e_3\e_3}]c}
    {\exp[\h_{yz,\e_1\e_1}]cd+\exp[-\frac{1}{2}\h_{yz,\e_1\e_2}]b+\exp[-\frac{1}{2}\h_{yz,\e_1\e_3}]a}\\[2mm]
    e^{-\frac{1}{2} h_{xy,\e_2\e_1}-h_{xy,\e_1\e_1}}=\prod\limits_{z\in
        S(y)}\frac{\exp[\h_{yz,\e_1\e_1}]c+\exp[-\frac{1}{2}\h_{yz,\e_1\e_2}]bd+\exp[-\frac{1}{2}\h_{yz,\e_1\e_3}]a}
    {\exp[\h_{yz,\e_1\e_1}]cd+\exp[-\frac{1}{2}\h_{yz,\e_1\e_2}]b+\exp[-\frac{1}{2}\h_{yz,\e_1\e_3}]a}\\[2mm]
    e^{h_{xy,\e_2\e_2}-h_{xy,\e_1\e_1}}=\prod\limits_{z\in
        S(y)}\frac{\exp[-\frac{1}{2}\h_{yz,\e_2\e_1}]b+\exp[\h_{yz,\e_2\e_2}]cd+\exp[-\frac{1}{2}\h_{yz,\e_2\e_3}]b}
    {\exp[\h_{yz,\e_1\e_1}]cd+\exp[-\frac{1}{2}\h_{yz,\e_1\e_2}]b+\exp[-\frac{1}{2}\h_{yz,\e_1\e_3}]a}\\[2mm]
    e^{-\frac{1}{2}h_{xy,\e_2\e_3}-h_{xy,\e_1\e_1}}=\prod\limits_{z\in  S(y)}\frac{\exp[-\frac{1}{2}\h_{yz,\e_3\e_1}]a+\exp[-\frac{1}{2}\h_{yz,\e_3\e_2}]bd+\exp[\h_{yz,\e_3\e_3}]c}
    {\exp[\h_{yz,\e_1\e_1}]cd+\exp[-\frac{1}{2}\h_{yz,\e_1\e_2}]b+\exp[-\frac{1}{2}\h_{yz,\e_1\e_3}]a}\\[2mm]
    e^{-\frac{1}{2}h_{xy,\e_3\e_1}-h_{xy,\e_1\e_1}}=\prod\limits_{z\in  S(y)}\frac{\exp[\h_{yz,\e_1\e_1}]c+\exp[-\frac{1}{2}\h_{yz,\e_1\e_2}]b+\exp[-\frac{1}{2}\h_{yz,\e_1\e_3}]ad}
    {\exp[\h_{yz,\e_1\e_1}]cd+\exp[-\frac{1}{2}\h_{yz,\e_1\e_2}]b+\exp[-\frac{1}{2}\h_{yz,\e_1\e_3}]a}\\[2mm]
    e^{-\frac{1}{2}h_{xy,\e_3\e_2}-h_{xy,\e_1\e_1}}=\prod\limits_{z\in  S(y)}\frac{\exp[-\frac{1}{2}\h_{yz,\e_2\e_1}]b+\exp[\h_{yz,\e_2\e_2}]c+\exp[-\frac{1}{2}\h_{yz,\e_2\e_3}]bd}
    {\exp[\h_{yz,\e_1\e_1}]cd+\exp[-\frac{1}{2}\h_{yz,\e_1\e_2}]b+\exp[-\frac{1}{2}\h_{yz,\e_1\e_3}]a}\\[2mm]
    e^{h_{xy,\e_3\e_3}-h_{xy,\e_1\e_1}}=\prod\limits_{z\in  S(y)}\frac{\exp[-\frac{1}{2}\h_{yz,\e_3\e_1}]a+\exp[\h_{yz,\e_3\e_2}]b+\exp[\h_{yz,\e_3\e_3}]cd}
    {\exp[\h_{yz,\e_1\e_1}]cd+\exp[-\frac{1}{2}\h_{yz,\e_1\e_2}]b+\exp[-\frac{1}{2}\h_{yz,\e_1\e_3}]a}\\

    \end{array}
    \right.
    \end{equation}
    where $a=\exp(\beta \ab)$, $b=\exp(\beta \bb)$, $c=\exp(\beta \cb)$ and $d=\exp(\beta J)$.
\end{thm}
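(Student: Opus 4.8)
The plan is to establish the compatibility condition \eqref{comp} by the standard transfer-matrix/tree-recursion computation, adapted to the ``memory of length 2'' setup, in which the boundary fields live on \emph{edges} $\langle x,y\rangle$ rather than on vertices. First I would fix $n$ and start from the definition \eqref{mu} of $\mu^{(n)}_{\h}$, splitting the Hamiltonian $H_n$ into the part supported on $V_{n-1}$ together with the edges $\langle x,y\rangle$, $x\in W_{n-1}$, $y\in S(x)$, and the next-nearest-neighbour $\delta$-terms $\widetilde{>x,y<}$ that straddle $W_{n-1}$ and $W_n$. Summing $\mu^{(n)}_{\h}(\s_{n-1}\vee\w)$ over $\w\in\Om_{W_n}$, the sum factorizes over the vertices $y\in W_{n-1}$ (more precisely over the ``cherries'' consisting of $y$ together with its successor set $S(y)$), because the interaction terms newly introduced at level $n$ couple each $z\in W_n$ only to its parent $y$ and, through the competing Potts term, to the parent $x$ of $y$. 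Carrying out the sum over the spin values $\s(z)$, $z\in S(y)$, at each such cherry produces, for each $y$, a factor
$$
\prod_{z\in S(y)}\ \sum_{\t\in\Phi}\exp\!\Big[\beta\lambda(\s(y),\t)+\beta J\,\delta_{\s(x)\t}+\s(y)\t\,\h_{yz,\s(y)\t}\Big].
$$
Here I would use the explicit form \eqref{cond} of $\lambda$ together with the inner-product normalization $\eta_i\cdot\eta_j$ to replace $\lambda(\eta_i,\eta_j)$ by $\ab,\bb,\cb$ according to $|i-j|$, and set $a=e^{\beta\ab}$, $b=e^{\beta\bb}$, $c=e^{\beta\cb}$, $d=e^{\beta J}$; the factor $d$ appears exactly in the summand whose $z$-spin equals $\s(x)$.

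Next I would impose that, after this summation, the result is proportional (with a constant independent of $\s_{n-1}$, absorbed into the normalization $Z_{n-1}/Z_n$) to $\mu^{(n-1)}_{\h}(\s_{n-1})$, which carries the boundary term $\sum_{x\in W_{n-2}}\sum_{y\in S(x)}\s(x)\s(y)\h_{xy,\s(x)\s(y)}$. Matching the two expressions forces, for each edge $\langle x,y\rangle$ with $x\in W_{n-2}$, $y\in W_{n-1}$, and each pair of spin values, an identity between $\exp[\s(x)\s(y)\h_{xy,\s(x)\s(y)}]$ and the product over $z\in S(y)$ of the cherry sums above, up to a factor that cancels when one takes ratios. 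The clean way to package this is to divide every equation by the one corresponding to the reference pair $(\eta_1,\eta_1)$: since $\eta_i\cdot\eta_i=1$ and $\eta_i\cdot\eta_j=-\tfrac12$ for $i\neq j$, the left-hand sides become precisely $e^{-\frac12 h_{xy,\e_i\e_j}-h_{xy,\e_1\e_1}}$ for the mixed pairs and $e^{h_{xy,\e_i\e_i}-h_{xy,\e_1\e_1}}$ for the diagonal pairs $i=2,3$, while the right-hand sides become the eight displayed products over $z\in S(y)$. Writing out the numerator of the equation indexed by $(\eta_i,\eta_j)$ as $\sum_{\t}\exp[\text{(weight of }(\eta_i,\t))]\cdot(\text{$d$ if $\t=\eta_j$})$ and the common denominator as the $(\eta_1,\eta_1)$ numerator reproduces system \eqref{necessary} term by term; I would check one mixed case and one diagonal case in detail and assert the rest by the evident symmetry. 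For the converse direction, I would run the computation backwards: assuming \eqref{necessary}, the telescoping product over all levels collapses and \eqref{comp} holds, so by Kolmogorov's extension theorem a Gibbs measure exists.

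The main obstacle, and the step that needs genuine care rather than routine bookkeeping, is the correct treatment of the competing next-nearest-neighbour Potts term $J\sum_{\widetilde{>x,y<}}\delta_{\s(x)\s(y)}$ under the partial summation: unlike in a pure nearest-neighbour model, the weight attached to a new vertex $z\in W_n$ depends not only on its parent $y\in W_{n-1}$ but also on the grandparent $x\in W_{n-2}$, which is exactly why the scheme requires boundary fields indexed by the \emph{pair} $(\s(x),\s(y))$ and a ``memory of length 2''. One must verify that after summing over $W_n$ the dependence on the configuration on $W_{n-2}$ enters \emph{only} through the pairs $(\s(x),\s(y))$ already recorded by $\h$, so that the reduced expression genuinely has the form \eqref{mu} at level $n-1$; this is what makes the factor $d$ land in the single summand $\t=\s(x)$ in each product and is the crux of why the eight equations close up. Once this structural point is settled, the remaining work — expanding the cherry sums, substituting $a,b,c,d$, and dividing by the $(\eta_1,\eta_1)$ equation — is a direct calculation that yields \eqref{necessary}.
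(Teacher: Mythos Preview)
Your proposal is correct and follows essentially the same route as the paper: split $H_n(\s\vee\w)$ into $H_{n-1}(\s)$ plus the new edge-$\lambda$ terms and the grandparent--grandchild $J\delta$ terms, factorize the $W_n$-sum over cherries $\{y\}\cup S(y)$, and then divide each of the nine resulting identities by the $(\e_1,\e_1)$ one to obtain \eqref{necessary}; the sufficiency direction is likewise the paper's argument of reconstituting \eqref{comp} from \eqref{necessary} up to a configuration-independent factor absorbed into $Z_n/Z_{n-1}$. Your emphasis on why the memory-of-length-2 boundary field is forced by the competing term (so that the $d$ factor lands on the single summand $\t=\s(x)$) is exactly the structural point the paper's computation relies on.
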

\begin{proof}
    {\sc Necessity}. From \eqref{comp}, we have
    \begin{eqnarray}\label{comp111}
    &&L_n\sum\limits_{\eta\in \Omega_{W_{n}}}\exp[-\beta H_n(\s\vee
    \eta)+
    \sum\limits_{x\in W_{n-1}}\sum\limits_{y\in
        S(x)}\sigma(x)\sigma(y)\h_{xy,\sigma(x)\sigma(y)}]\nonumber\\[2mm]
    &=&\exp [-\beta H_n(\s)+\sum\limits_{x\in W_{n-2}}\sum\limits_{y\in
        S(x)}\sigma(x)\sigma(y)\h_{xy,\sigma(x)\sigma(y)}],
    \end{eqnarray}
    where $L_n=\frac{Z_{n-1}}{Z_n}$.

    For $\s\in V_{n-1}$ and $\eta \in W_{n}$, we rewrite the Hamiltonian
    as follows:

    \begin{eqnarray}\label{ham1}
    H_n(\s\vee\eta)
    &=&\sum\limits_{<x,y>\in V_{n-1}}\lambda(\sigma(x)\sigma(y)) +
    \sum\limits_{x\in W_{n-1}}\sum\limits_{y\in
        S(x)}\lambda(\sigma(x)\eta(y))\nonumber \\
    \nonumber
    &&-J\sum\limits_{>x,y<\in V_{n-1}}\delta_{\sigma(x)\sigma(y)} -J
    \sum\limits_{x\in
        W_{n-2}}\sum\limits_{z\in S^2(x)}\delta_{\sigma(x)\eta(z)}\\
    &=&H_{n-1}(\s)+\sum\limits_{x\in W_{n-1}}\sum\limits_{y\in
        S(x)}\lambda(\sigma(x)\eta(y))-J \sum\limits_{x\in
        W_{n-2}}\sum\limits_{z\in S^2(x)}\delta_{\sigma(x)\eta(z)}.
    \end{eqnarray}

    Therefore, the last equality with \eqref{comp111} implies
    \begin{eqnarray}\label{Kolmogorov1}
    &&L_n\sum\limits_{\eta\in \Omega_{W_{n}}}\exp[-\beta
    H_{n-1}(\s)-\beta \sum\limits_{x\in W_{n-1}}\sum\limits_{y\in
        S(x)}\lambda(\sigma(x)\eta(y))\nonumber\\
    \nonumber &+&
    J\sum\limits_{x\in W_{n-2}}\sum\limits_{z\in
        S^2(x)}\delta_{\sigma(x)\eta(z)}+ \sum\limits_{x\in W_{n-1}}\sum\limits_{y\in
        S(x)}\sigma(x)\sigma(y)\h_{xy,\sigma(x)\eta(y)}]\\
    &=&\exp [-\beta H_{n-1}(\s)+\sum\limits_{x\in
        W_{n-2}}\sum\limits_{y\in
        S(x)}\sigma(x)\sigma(y)\h_{xy,\sigma(x)\sigma(y)}],
    \end{eqnarray}
    %
    Hence, one gets
    \begin{eqnarray*}\label{Kolmogorov2}
        &&L_n\prod\limits_{x\in W_{n-2}}\prod\limits_{y\in S(x)}\prod\limits_{z\in S(y)}\sum\limits_{\eta(z)\in \{\eta_{1},\eta_{2},\eta_{3}\}}\exp[-\beta \lambda(\sigma(x),\eta(z))+\beta J\delta_{\sigma(x)\eta(z)}+\eta(z)\sigma(y)\eta(z)\h_{yz,\sigma(y)\eta(z)}]\\\nonumber
        &=&\prod\limits_{x\in W_{n-2}}\prod\limits_{y\in S(x)}\exp [\sigma(x)\sigma(y)\h_{xy,\sigma(x)\sigma(y)}].
    \end{eqnarray*}

    Let us fix $<x,y>$. Then considering all values of $\sigma(x),
    \sigma(y)\in \{\eta_1,\eta_2,\eta_3\}$, from \eqref{Kolmogorov1}, we obtain

    \begin{align}\label{Kolmogorov12}
        e^{-\frac{1}{2}h_{xy,\eta_{1}\eta_{2}}-h_{xy,\eta_{1}\eta_{1}}}&=\prod\limits_{z\in S(y)}\frac{\sum\limits_{\eta(z)\in \{\eta_{1},\eta_{2},\eta_{3}\}}\exp[-\beta \lambda(\eta_{1},\eta(z))+\beta J\delta_{\eta_{1}\eta(z)}+\eta_{2}\eta(z)\h_{yz,\eta_{2}\eta(z)}]}{\sum\limits_{\eta(z)\in \{\eta_{1},\eta_{2},\eta_{3}\}}\exp[-\beta \lambda(\eta_{1},\eta(z))+\beta J\delta_{\eta_{1}\eta(z)}+\eta_{1}\eta(z)\h_{yz,\eta_{1}\eta(z)}]}\\\nonumber
        &=\frac{\Lambda(\eta_{1},\eta_{2})}{\Lambda(\eta_{1},\eta_{1})}
    \end{align}

    \begin{align}\label{Kolmogorov13}
        e^{-\frac{1}{2}h_{xy,\eta_{1}\eta_{3}}-h_{xy,\eta_{1}\eta_{1}}}&=\prod\limits_{z\in S(y)}\frac{\sum\limits_{\eta(z)\in \{\eta_{1},\eta_{2},\eta_{3}\}}\exp[-\beta \lambda(\eta_{1},\eta(z))+\beta J\delta_{\eta_{1}\eta(z)}+\eta_{3}\eta(z)\h_{yz,\eta_{3}\eta(z)}]}{\sum\limits_{\eta(z)\in \{\eta_{1},\eta_{2},\eta_{3}\}}\exp[-\beta \lambda(\eta_{1},\eta(z))+\beta J\delta_{\eta_{1}\eta(z)}+\eta_{1}\eta(z)\h_{yz,\eta_{1}\eta(z)}]}\\\nonumber
        &=\frac{\Lambda(\eta_{1},\eta_{3})}{\Lambda(\eta_{1},\eta_{1})}
    \end{align}

    \begin{align}\label{Kolmogorov21}
        e^{-\frac{1}{2}h_{xy,\eta_{2}\eta_{1}}-h_{xy,\eta_{1}\eta_{1}}}&=\prod\limits_{z\in S(y)}\frac{\sum\limits_{\eta(z)\in \{\eta_{1},\eta_{2},\eta_{3}\}}\exp[-\beta \lambda(\eta_{2},\eta(z))+\beta J\delta_{\eta_{2}\eta(z)}+\eta_{1}\eta(z)\h_{yz,\eta_{1}\eta(z)}]}{\sum\limits_{\eta(z)\in \{\eta_{1},\eta_{2},\eta_{3}\}}\exp[-\beta \lambda(\eta_{1},\eta(z))+\beta J\delta_{\eta_{1}\eta(z)}+\eta_{1}\eta(z)\h_{yz,\eta_{1}\eta(z)}]}\\\nonumber
        &=\frac{\Lambda(\eta_{2},\eta_{1})}{\Lambda(\eta_{1},\eta_{1})}
    \end{align}

    \begin{align}\label{Kolmogorov22}
        e^{h_{xy,\eta_{2}\eta_{2}}-h_{xy,\eta_{1}\eta_{1}}}&=\prod\limits_{z\in S(y)}\frac{\sum\limits_{\eta(z)\in \{\eta_{1},\eta_{2},\eta_{3}\}}\exp[-\beta \lambda(\eta_{2},\eta(z))+\beta J\delta_{\eta_{2}\eta(z)}+\eta_{2}\eta(z)\h_{yz,\eta_{2}\eta(z)}]}{\sum\limits_{\eta(z)\in \{\eta_{1},\eta_{2},\eta_{3}\}}\exp[-\beta \lambda(\eta_{1},\eta(z))+\beta J\delta_{\eta_{1}\eta(z)}+\eta_{1}\eta(z)\h_{yz,\eta_{1}\eta(z)}]}\\\nonumber
        &=\frac{\Lambda(\eta_{2},\eta_{2})}{\Lambda(\eta_{1},\eta_{1})}
    \end{align}

    \begin{align}\label{Kolmogorov23}
        e^{-\frac{1}{2}h_{xy,\eta_{2}\eta_{3}}-h_{xy,\eta_{1}\eta_{1}}}=&\prod\limits_{z\in S(y)}\frac{\sum\limits_{\eta(z)\in \{\eta_{1},\eta_{2},\eta_{3}\}}\exp[-\beta \lambda(\eta_{2},\eta(z))+\beta J\delta_{\eta_{2}\eta(z)}+\eta_{3}\eta(z)\h_{yz,\eta_{3}\eta(z)}]}{\sum\limits_{\eta(z)\in \{\eta_{1},\eta_{2},\eta_{3}\}}\exp[-\beta \lambda(\eta_{1},\eta(z))+\beta J\delta_{\eta_{1}\eta(z)}+\eta_{1}\eta(z)\h_{yz,\eta_{1}\eta(z)}]}\\\nonumber
        &=\frac{\Lambda(\eta_{2},\eta_{3})}{\Lambda(\eta_{1},\eta_{1})}
    \end{align}

    \begin{align}\label{Kolmogorov31}
        e^{-\frac{1}{2}h_{xy,\eta_{3}\eta_{1}}-h_{xy,\eta_{1}\eta_{1}}}&=\prod\limits_{z\in S(y)}\frac{\sum\limits_{\eta(z)\in \{\eta_{1},\eta_{2},\eta_{3}\}}\exp[-\beta \lambda(\eta_{3},\eta(z))+\beta J\delta_{\eta_{3}\eta(z)}+\eta_{1}\eta(z)\h_{yz,\eta_{1}\eta(z)}]}{\sum\limits_{\eta(z)\in \{\eta_{1},\eta_{2},\eta_{3}\}}\exp[-\beta \lambda(\eta_{1},\eta(z))+\beta J\delta_{\eta_{1}\eta(z)}+\eta_{1}\eta(z)\h_{yz,\eta_{1}\eta(z)}]}\\\nonumber
        &=\frac{\Lambda(\eta_{3},\eta_{1})}{\Lambda(\eta_{1},\eta_{1})}
    \end{align}

    \begin{align}\label{Kolmogorov32}
        e^{-\frac{1}{2}h_{xy,\eta_{3}\eta_{2}}-h_{xy,\eta_{1}\eta_{1}}}&=\prod\limits_{z\in S(y)}\frac{\sum\limits_{\eta(z)\in \{\eta_{1},\eta_{2},\eta_{3}\}}\exp[-\beta \lambda(\eta_{3},\eta(z))+\beta J\delta_{\eta_{3}\eta(z)}+\eta_{2}\eta(z)\h_{yz,\eta_{2}\eta(z)}]}{\sum\limits_{\eta(z)\in \{\eta_{1},\eta_{2},\eta_{3}\}}\exp[-\beta \lambda(\eta_{1},\eta(z))+\beta J\delta_{\eta_{1}\eta(z)}+\eta_{1}\eta(z)\h_{yz,\eta_{1}\eta(z)}]}\\\nonumber
        &=\frac{\Lambda(\eta_{3},\eta_{2})}{\Lambda(\eta_{1},\eta_{1})}
    \end{align}

    \begin{align}\label{Kolmogorov33}
        e^{h_{xy,\eta_{3}\eta_{3}}-h_{xy,\eta_{1}\eta_{1}}}&=\prod\limits_{z\in S(y)}\frac{\sum\limits_{\eta(z)\in \{\eta_{1},\eta_{2},\eta_{3}\}}\exp[-\beta \lambda(\eta_{3},\eta(z))+\beta J\delta_{\eta_{3}\eta(z)}+\eta_{3}\eta(z)\h_{yz,\eta_{3}\eta(z)}]}{\sum\limits_{\eta(z)\in \{\eta_{1},\eta_{2},\eta_{3}\}}\exp[-\beta \lambda(\eta_{1},\eta(z))+\beta J\delta_{\eta_{1}\eta(z)}+\eta_{1}\eta(z)\h_{yz,\eta_{1}\eta(z)}]}\\\nonumber
        &=\frac{\Lambda(\eta_{3},\eta_{3})}{\Lambda(\eta_{1},\eta_{1})}
    \end{align}
    These equations imply the desired ones.\\

    {\sc Sufficiency}. Now we assume that the system of equations
    \eqref{necessary} is valid, then one finds

    $$e^{\sigma(x)\sigma(y)h_{xy,\sigma(x)\sigma(y)}}D(x,y)=\prod\limits_{z\in S(y)}\sum\limits_{\eta(z)\in \{\eta_1,\eta_2,\eta_3\}}
    \exp[\sigma(y)\eta(z)\h_{yz,\sigma(y)\eta(z)}+\beta
    \eta(z)(\sigma(y)+J \sigma(x))],$$ for some constant $D(x,y)$
    depending on $x$ and $y$.

    From the last equality, we obtain
    \begin{eqnarray}\label{Kolmogorov5}
    &&\prod\limits_{x\in W_{n-2}}\prod\limits_{y\in S(x)}D(x,y)e^{\sigma(x)\sigma(y)h_{xy,\sigma(x)\sigma(y)}}\\\nonumber
    &=&\prod\limits_{x\in W_{n-2}}\prod\limits_{y\in S(x)}\prod\limits_{z\in S(y)}\sum\limits_{\eta(z)\in \{\eta_1,\eta_2,\eta_3\}}e^{[\sigma(y)\eta(z)\h_{yz,\sigma(y)\eta(z)}+\beta \eta(z)(\sigma(y)+ J \sigma(x))]}.
    \end{eqnarray}
    Multiply both sides of the equation \eqref{Kolmogorov5} by $e^{-\beta H_{n-1}(\sigma)}$ and denoting

    $$U_{n-1}=\prod\limits_{x\in W_{n-2}}\prod\limits_{y\in S(x)}D(x,y),$$
    from \eqref{Kolmogorov5}, one has

    \begin{eqnarray*}
        &&U_{n-1}e^{-\beta H_{n-1}(\sigma)+\sum\limits_{x\in W_{n-2}}
            \sum\limits_{y\in S(x)}\sigma(x)\sigma(y)h_{xy,\sigma(x)\sigma(y)}}\\
        &=&\prod\limits_{x\in W_{n-2}}\prod\limits_{y\in S(x)}
        \prod\limits_{z\in S(y)}e^{-\beta H_{n-1}(\sigma)}
        \sum\limits_{\eta(z)\in \{\eta_1,\eta_2,\eta_3\}}e^{[\sigma(y)\eta(z)\h_{yz,\sigma(y)\eta(z)}+\beta \eta(z)(\sigma(y)+ J \sigma(x))]}.
    \end{eqnarray*}
    which yields
    $$
    U_{n-1}Z_{n-1}\m^{(n-1)}_\h(\sigma)=\sum\limits_{\eta}e^{-\beta
        H_{n}(\sigma\vee\eta) +\sum\limits_{x\in W_{n-2}}\sum\limits_{y\in
            S(x)}\sigma(x)\sigma(y)h_{xy,\sigma(x)\sigma(y)}}.
    $$
    This means
    \begin{eqnarray}\label{eq4}
    U_{n-1}Z_{n-1}\m^{(n-1)}_\h(\sigma)=Z_{n}\sum\limits_{\eta}\m^{(n)}_\h(\sigma\vee\eta).
    \end{eqnarray}
    As $\m^{(n)}_\h$ ($n\geq 1$) is a probability  measure, i.e.
    $$
    \sum\limits_{\sigma\in \{\eta_1,\eta_2,\eta_3\}^{V_{n-1}}}\m^{(n-1)}_\h(\sigma)
    =\sum\limits_{\sigma\in \{\eta_1,\eta_2,\eta_3\}^{V_{n-1}}}\sum\limits_{\eta \in
        \{\eta_1,\eta_2,\eta_3\}^{W_{n}}}\m^{(n)}_\h(\sigma\vee\eta)=1.
    $$
    From these equalities and \eqref{eq4} we have
    $Z_{n}=U_{n-1}Z_{n-1}$. This with \eqref{eq4} implies that
    \eqref{comp} holds. The proof is complete.
\end{proof}
According to Theorem \ref{theorem1} the problem of describing the
Gibbs measures is reduced to the descriptions of the solutions of
the functional equations \eqref{necessary}.

\begin{cor}\label{compatibility}
    The measures $\m^{(n)}_\h$, $ n=1,2,\dots$ satisfy the compatibility
    condition \eqref{comp} if and only if for any $n\in \bn$ the
    following equation holds:
    \begin{equation}\label{canonic_u}
    \left\{\begin{array}{ll}
    u_{xy,1}=\prod\limits_{z\in
        S(y)}\frac{u_{yz,3}bd+u_{yz,4}c+u_{yz,5}b}
    {u_{yz,1}b+u_{yz,2}a+cd},\ \ \    u_{xy,2}=\prod\limits_{z\in
        S(y)}\frac{u_{yz,6}ad+u_{yz,7}b+u_{yz,8}c}
    {u_{yz,1}b+u_{yz,2}a+cd},\\[4mm]
    u_{xy,3}=\prod\limits_{z\in
        S(y)}\frac{c+u_{yz,1}bd+u_{yz,2}a}
    {u_{yz,1}b+u_{yz,2}a+cd},\ \   \ \   \ \  u_{xy,4}=\prod\limits_{z\in
        S(y)}\frac{u_{yz,3}b+u_{yz,4}cd+u_{yz,5}b}
    {u_{yz,1}b+u_{yz,2}a+cd},\\[4mm]
    u_{xy,5}=\prod\limits_{z\in
        S(y)}\frac{u_{yz,6}a+u_{yz,7}bd+u_{yz,8}c}
    {u_{yz,1}b+u_{yz,2}a+cd}, \ \     u_{xy,6}=\prod\limits_{z\in
        S(y)}\frac{c+u_{yz,1}b+u_{yz,2}ad}
    {u_{yz,1}b+u_{yz,2}a+cd}\\[4mm]
    u_{xy,7}=\prod\limits_{z\in
        S(y)}\frac{u_{yz,3}b+u_{yz,4}c+u_{yz,5}bd}
    {u_{yz,1}b+u_{yz,2}a+cd},\ \ 
    u_{xy,8}=\prod\limits_{z\in
        S(y)}\frac{u_{yz,6}a+u_{yz,7}b+u_{yz,8}cd}
    {u_{yz,1}b+u_{yz,2}a+cd},\\
    \end{array}\right.
    \end{equation}
    where, as before $a=\exp(\beta \ab)$, $b=\exp(\beta \bb)$, $c=\exp(\beta \cb)$ and $d=\exp(\beta J)$, and
    \begin{equation}\label{denuh}
    \begin{array}{ll}
    u_{xy,1}=\exp\left(-\frac{1}{2} h_{xy,\e_1\e_2}-h_{xy,\e_1\e_1}\right), \ \   u_{xy,2}=\exp\left(-\frac{1}{2} h_{xy,\e_1\e_3}-h_{xy,\e_1\e_1}\right),\\[2mm]
    u_{xy,3}=\exp\left(-\frac{1}{2} h_{xy,\e_2\e_1}-h_{xy,\e_1\e_1}\right),\ \     u_{xy,4}=\exp\left(h_{xy,\e_2\e_2}-h_{xy,\e_1\e_1}\right),\\[2mm]
    u_{xy,5}=\exp\left(-\frac{1}{2} h_{xy,\e_2\e_3}-h_{xy,\e_1\e_1}\right),\ \     u_{xy,6}=\exp\left(-\frac{1}{2} h_{xy,\e_3\e_1}-h_{xy,\e_1\e_1}\right),\\[2mm]
    u_{xy,7}=\exp\left(-\frac{1}{2} h_{xy,\e_3\e_2}-h_{xy,\e_1\e_1}\right),\ \      u_{xy,8}=\exp\left(h_{xy,\e_3\e_3}-h_{xy,\e_1\e_1}\right).\\[2mm]
    \end{array}
    \end{equation}
\end{cor}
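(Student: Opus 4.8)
The plan is to read off the statement directly from Theorem~\ref{theorem1} via the change of variables \eqref{denuh}. By Theorem~\ref{theorem1} the compatibility condition \eqref{comp} is equivalent to the system \eqref{necessary}, so it suffices to show that, under the substitution \eqref{denuh}, the system \eqref{necessary} is equivalent to \eqref{canonic_u}. First I would observe that the left-hand side of the $k$-th equation in \eqref{necessary} is, by definition \eqref{denuh}, exactly $u_{xy,k}$; this disposes of the left-hand sides of all eight equations at once.

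The key step is then a term-by-term normalization of the right-hand sides. Fix a vertex $y$ and a successor $z\in S(y)$, and in the corresponding fraction of each equation of \eqref{necessary} divide both numerator and denominator by $\exp[h_{yz,\e_1\e_1}]$. Every surviving exponential factor $\exp[-\tfrac12 h_{yz,\e_i\e_j}]$ (or $\exp[h_{yz,\e_i\e_i}]$) then combines with the subtracted $h_{yz,\e_1\e_1}$ to become precisely one of $u_{yz,1},\dots,u_{yz,8}$ as listed in \eqref{denuh}, while the (common) denominator turns into $u_{yz,1}b+u_{yz,2}a+cd$ in all eight equations. Matching the eight resulting identities with \eqref{canonic_u} term by term completes the forward implication. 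For the converse, any positive solution $\{u_{xy,k}\}$ of \eqref{canonic_u} can be lifted to a boundary field $\h$ satisfying \eqref{necessary}: the $u$-variables carry exactly the information of the differences of $h$-components appearing in \eqref{denuh}, so one may, e.g., set $h_{xy,\e_1\e_1}=0$ and recover $h_{xy,\e_i\e_j}$ from \eqref{denuh}. Hence the two systems are equivalent and the corollary follows.

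I do not anticipate a genuine difficulty here: the content is pure bookkeeping, namely keeping track of which of the eight fractions goes with which variable and verifying that the exponents add up correctly after the normalization (in particular that the factors $a,b,c,d$ stay in the positions recorded in \eqref{necessary}). The only point worth stating explicitly is the positivity of the $u_{xy,k}$, which is automatic since each is an exponential and which is exactly what legitimizes the reverse direction of the equivalence.
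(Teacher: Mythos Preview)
Your proposal is correct and matches the paper's treatment: the corollary is stated immediately after Theorem~\ref{theorem1} without a separate proof, as it is just the change of variables \eqref{denuh} applied to \eqref{necessary}, exactly as you describe. Your remark on lifting a positive solution $\{u_{xy,k}\}$ back to a field $\h$ is slightly more than the corollary itself requires (the statement is for a fixed $\h$, so the equivalence of \eqref{necessary} and \eqref{canonic_u} is a straight rewriting in both directions), but it is correct and is precisely the observation the paper uses in the subsequent Theorem~\ref{ccu}.
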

It is worth mentioning that there are infinitely many solutions of
the system \eqref{necessary} corresponding to each solution of the
system of equations \eqref{canonic_u}. However, we show that each
solution of the system \eqref{canonic_u} uniquely determines a
Gibbs measure. We denote by $\mu_{\bf{u}}$ the Gibbs measure
corresponding to the solution  $\bf{u}$
of \eqref{canonic_u}.

\begin{thm}\label{ccu}
    There exists a unique Gibbs measure $\mu_\mathbf{u}$ associated with
    the function $\mathbf{u}=\{\mathbf{u}_{xy}, \ \langle{x,y}\rangle\in
    L \}$  where $\mathbf{u}_{xy}=(u_{xy,1},u_{xy,2},u_{xy,3},u_{xy,4},u_{xy,5},u_{xy,6},u_{xy,7},u_{xy,8})$ is a
    solution of the system \eqref{canonic_u}.
\end{thm}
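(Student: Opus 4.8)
The plan is to combine the compatibility criterion of Theorem~\ref{theorem1} (equivalently Corollary~\ref{compatibility}) with the Kolmogorov extension theorem, and then to show that the resulting measure is insensitive to the one free parameter per edge that remains when a boundary field $\h$ is reconstructed from $\mathbf{u}$.

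First I would \emph{lift} $\mathbf{u}$ to a boundary field. Given a solution $\mathbf{u}=\{\mathbf{u}_{xy}\}$ of \eqref{canonic_u} (we restrict to solutions with positive components, these being the ones realized by boundary fields through \eqref{denuh}), note that on each edge $\langle x,y\rangle$ the relations \eqref{denuh} are eight log-linear equations in the nine unknowns $h_{xy,\eta_i\eta_j}$, and their only linear dependence is the common occurrence of $h_{xy,\eta_1\eta_1}$. So I would normalize $h_{xy,\eta_1\eta_1}:=0$ and solve uniquely for the remaining eight components, obtaining a field $\h$ whose quantities \eqref{denuh} are exactly the prescribed $\mathbf{u}$. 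Since \eqref{canonic_u} is precisely \eqref{necessary} rewritten through \eqref{denuh}, this $\h$ satisfies \eqref{necessary}; hence by Theorem~\ref{theorem1} the family $\{\mu^{(n)}_{\h}\}$ is compatible in the sense of \eqref{comp}, and Kolmogorov's theorem provides a unique measure $\mu_{\h}$ on $\Omega$ with marginals $\mu^{(n)}_{\h}$. I would then set $\mu_{\mathbf{u}}:=\mu_{\h}$, which gives existence.

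For uniqueness (and well-definedness) I would show that the construction does not depend on the normalization chosen above. If $\h'$ is any other field whose quantities \eqref{denuh} also equal $\mathbf{u}$, then, writing $t_{xy}:=h'_{xy,\eta_1\eta_1}-h_{xy,\eta_1\eta_1}$, reading off \eqref{denuh} componentwise gives $h'_{xy,\eta_i\eta_i}=h_{xy,\eta_i\eta_i}+t_{xy}$ for $i=1,2,3$ and $h'_{xy,\eta_i\eta_j}=h_{xy,\eta_i\eta_j}-2t_{xy}$ for $i\neq j$. Since in \eqref{mu} the coefficient $\sigma(x)\sigma(y)$ equals $1$ when $\sigma(x)=\sigma(y)$ and $-\tfrac12$ otherwise, it follows that for every ordered pair $(u,v)\in\Phi^2$ the scalar weight $(u\cdot v)\,h_{xy,uv}$ in \eqref{mu} is replaced by $(u\cdot v)\,h_{xy,uv}+t_{xy}$, a shift \emph{independent of $(u,v)$}. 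Consequently, for every $n$ the boundary term in the exponent of \eqref{mu} for $\h'$ equals that for $\h$ plus the $\sigma$-independent constant $T_n:=\sum_{x\in W_{n-1}}\sum_{y\in S(x)}t_{xy}$; the factor $e^{T_n}$ then appears identically in the numerator of \eqref{mu} and in the partition function $Z_n$ of \eqref{Zn}, and cancels after normalization. Hence $\mu^{(n)}_{\h'}=\mu^{(n)}_{\h}$ for all $n$, so $\mu_{\h'}=\mu_{\h}=\mu_{\mathbf{u}}$, which is therefore the unique Gibbs measure reproducing $\mathbf{u}$ through \eqref{denuh}.

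The hard part should be the bookkeeping in the uniqueness step: one has to identify exactly the one-parameter family of boundary fields attached to a given $\mathbf{u}$ and verify rigorously that the induced modification of \eqref{mu} is a genuine configuration-independent constant (and not something that merely looks constant on $W_{n-1}$ but secretly depends on $\sigma$ through the successor sets $S(x)$). Once this gauge invariance is in place, existence is immediate from Theorem~\ref{theorem1} together with Kolmogorov's theorem, and uniqueness follows from the cancellation of $e^{T_n}$.
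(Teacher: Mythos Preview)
Your proposal is correct and follows essentially the same route as the paper: lift $\mathbf{u}$ to a boundary field $\h$ (with the single free parameter $h_{xy,\eta_1\eta_1}$ per edge), invoke Theorem~\ref{theorem1}/Corollary~\ref{compatibility} and Kolmogorov for existence, and then verify that this free parameter cancels in the normalized measure. The only cosmetic difference is that the paper carries out the cancellation by rewriting the boundary weight as $\prod_{\langle x,y\rangle}e^{h_{xy,\eta_1\eta_1}}$ times products of the $u_{xy,i}$ over the edge-classes $\cn_{j,n}(\s)$, whereas you compute the gauge shift $t_{xy}$ directly; both arguments yield the same $\sigma$-independent factor that drops out of $\mu^{(n)}_\h$.
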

\begin{proof}
    Let $\mathbf{u}=\{\mathbf{u}_{xy}, \ \langle{x,y}\rangle\in L \}$ be
    a  function, where $\mathbf{u}_{xy}=(u_{xy,1},u_{xy,2},u_{xy,3},u_{xy,4},u_{xy,5},u_{xy,6},\\u_{xy,7},u_{xy,8})$ is
    a solution of the system \eqref{canonic_u}. Then, for any
    $h_{xy,++}\in\br$ a function $\mathbf{h}=\{\mathbf{h}_{xy},\
    \langle{x,y}\rangle\in L\}$ defined by

    \begin{eqnarray*}
        \mathbf{h}_{xy}&=&\big\{h_{xy,\eta_1,\eta_1},\
        \log(u_{xy,1})+h_{xy,\eta_1,\eta_2}, \
        \log(u_{xy,2})+h_{xy,\eta_1,\eta_3}, \
        \log(u_{xy,3})+h_{xy,\eta_2,\eta_1}, \
        \log(u_{xy,4}) \\
        &&+h_{xy,\eta_2,\eta_2},\log(u_{xy,5})+h_{xy,\eta_2,\eta_3}, \
        \log(u_{xy,6})+h_{xy,\eta_3,\eta_1}, \
        \log(u_{xy,7})+h_{xy,\eta_3,\eta_2}, \
        \log(u_{xy,8}) \\
        &&+h_{xy,\eta_3,\eta_3} \big\}
    \end{eqnarray*}
    is a solution of \eqref{necessary}.

    Now fix $n\geq1$. Since $|W_{n-1}|=k^{n-1}$ and $|S(x)|=k$ we get
    $|L_{n}\setminus L_{n-1}|=k^n$. Let $\s$ be any configuration on
    $\Om_{V_n}$. Denote
    \[
    \begin{array}{ll}
    \cn_{1,n}(\s)=\{\langle{x,y}\rangle\in{L_n\setminus{L_{n-1}}}:\ \s(x)=\eta_1,\ \s(y)=\eta_1,\ x\in W_{n-1},\ y\in S(x)\}\\
    \cn_{2,n}(\s)=\{\langle{x,y}\rangle\in{L_n\setminus{L_{n-1}}}:\ \s(x)=\eta_1,\ \s(y)=\eta_2,\ x\in W_{n-1},\ y\in S(x)\}\\
    \cn_{3,n}(\s)=\{\langle{x,y}\rangle\in{L_n\setminus{L_{n-1}}}:\ \s(x)=\eta_1,\ \s(y)=\eta_3,\ x\in W_{n-1},\ y\in S(x)\}\\
    \cn_{4,n}(\s)=\{\langle{x,y}\rangle\in{L_n\setminus{L_{n-1}}}:\ \s(x)=\eta_2,\ \s(y)=\eta_1,\ x\in W_{n-1},\ y\in S(x)\}\\
    \cn_{5,n}(\s)=\{\langle{x,y}\rangle\in{L_n\setminus{L_{n-1}}}:\ \s(x)=\eta_2,\ \s(y)=\eta_2,\ x\in W_{n-1},\ y\in S(x)\}\\
    \cn_{6,n}(\s)=\{\langle{x,y}\rangle\in{L_n\setminus{L_{n-1}}}:\ \s(x)=\eta_2,\ \s(y)=\eta_3,\ x\in W_{n-1},\ y\in S(x)\}\\
    \cn_{7,n}(\s)=\{\langle{x,y}\rangle\in{L_n\setminus{L_{n-1}}}:\ \s(x)=\eta_3,\ \s(y)=\eta_1,\ x\in W_{n-1},\ y\in S(x)\}\\
    \cn_{8,n}(\s)=\{\langle{x,y}\rangle\in{L_n\setminus{L_{n-1}}}:\ \s(x)=\eta_3,\ \s(y)=\eta_2,\ x\in W_{n-1},\ y\in S(x)\}\\
    \cn_{9,n}(\s)=\{\langle{x,y}\rangle\in{L_n\setminus{L_{n-1}}}:\
    \s(x)=\eta_3,\ \s(y)=\eta_3,\ x\in W_{n-1},\ y\in S(x)\}
    \end{array}
    \]

    We have

    \begin{eqnarray*}
        \prod_{x\in W_{n-1}\atop{y\in
                S(x)}}\exp\left\{h_{xy,\s(x)\s(y)}\s(x)\s(y)\right\}&=&
        \prod\limits_{\langle
            x,y\rangle\in\cn_{1,n}(\s)}\exp\left\{h_{xy,\eta_1\eta_1}\right\}
        \prod\limits_{\langle
            x,y\rangle\in\cn_{2,n}(\s)}u_{xy,1}\cdot\exp\left\{h_{xy,\eta_1\eta_1}\right\}  \\
        &&\times\prod\limits_{\langle
            x,y\rangle\in\cn_{3,n}(\s)}u_{xy,2}\cdot\exp\left\{h_{xy,\eta_1\eta_1}\right\}
        \prod\limits_{\langle
            x,y\rangle\in\cn_{4,n}(\s)}u_{xy,3}\cdot\exp\left\{h_{xy,\eta_1\eta_1}\right\}  \\
        &&\times\prod\limits_{\langle
            x,y\rangle\in\cn_{5,n}(\s)}u_{xy,4}\cdot\exp\left\{h_{xy,\eta_1\eta_1}\right\}
        \prod\limits_{\langle
            x,y\rangle\in\cn_{6,n}(\s)}u_{xy,5}\cdot\exp\left\{h_{xy,\eta_1\eta_1}\right\}  \\
        &&\times\prod\limits_{\langle
            x,y\rangle\in\cn_{7,n}(\s)}u_{xy,6}\cdot\exp\left\{h_{xy,\eta_1\eta_1}\right\}
        \prod\limits_{\langle
            x,y\rangle\in\cn_{8,n}(\s)}u_{xy,7}\cdot\exp\left\{h_{xy,\eta_1\eta_1}\right\}  \\
        &&\times\prod\limits_{\langle
            x,y\rangle\in\cn_{9,n}(\s)}u_{xy,8}\cdot\exp\left\{h_{xy,\eta_1\eta_1}\right\}  \\
    \end{eqnarray*}
    \begin{eqnarray*}
        =\prod\limits_{\langle x,y\rangle\in L_n\setminus
            L_{n-1}}\exp\left\{h_{xy,\eta_1\eta_1}\right\}
        \prod\limits_{\langle x,y\rangle\in\cn_{2,n}(\s)}u_{xy,1}
        \prod\limits_{\langle x,y\rangle\in\cn_{3,n}(\s)}u_{xy,2}
        \prod\limits_{\langle x,y\rangle\in\cn_{4,n}(\s)}u_{xy,3}
        \prod\limits_{\langle x,y\rangle\in\cn_{5,n}(\s)}u_{xy,4} \\
        \times\prod\limits_{\langle x,y\rangle\in\cn_{6,n}(\s)}u_{xy,5}
        \prod\limits_{\langle x,y\rangle\in\cn_{7,n}(\s)}u_{xy,6}
        \prod\limits_{\langle x,y\rangle\in\cn_{8,n}(\s)}u_{xy,7}
        \prod\limits_{\langle x,y\rangle\in\cn_{9,n}(\s)}u_{xy,8}
    \end{eqnarray*}

    By means of the last equality, from \eqref{mu} and \eqref{Zn} we
    find
    $$
    \mu_{\h}^{(n)}(\s)=\frac{\exp\{-\b H_n(\s)\}\prod\limits_{x\in
            W_{n-1}\atop{y\in
                S(x)}}\exp\left\{h_{xy,\s(x)\s(y)}\s(x)\s(y)\right\}}
    {\sum\limits_{\w\in\Om_{V_n}}\exp\{-\b H_n(\w)\}\prod\limits_{x\in
            W_{n-1}\atop{y\in
                S(x)}}\exp\left\{h_{xy,\s(x)\w(y)}\s(x)\w(y)\right\}}
    $$

    One can see from \eqref{mu} and \eqref{Zn} does not depend to
    $h_{xy,\eta_1\eta_1}$. So, we can say that each solution $\mathbf{u}$ of the
    system \eqref{canonic_u} uniquely determines only one Gibbs measure
    $\mu_{\mathbf{u}}$.
\end{proof}

\begin{rem}
    Hence, due to Theorem \ref{ccu} a phase transition exists for
    the model if the equation
    \eqref{canonic_u} has at least two solutions.
\end{rem}

\section{The existence of  the phase transition}\label{e-Gibbs measures}

In this section, we are going to establish the existence of the phase transition, by analyzing the equation \eqref{canonic_u} for the model defined on the Cayley tree of order two, i.e. $k=2$.

We recall that ${\bf u}=\{{\bf u}_{xy}\}_{\langle{x,y}\rangle\in L}$
is a translation-invariant function, if one has
$\mathbf{u}_{xy}=\mathbf{u}_{zw}$ for all
$\langle{x,y}\rangle,\langle{z,w}\rangle\in L$. A measure $\m_{\bf
    u}$, corresponding to a translation-invariant function ${\bf u}$,
is called a {\it translation-invariant Gibbs measure}.

Solving the equation \eqref{canonic_u}, in general, is rather very
complex. Therefore, let us first restrict ourselves to the
description of its translation-invariant solutions. Hence,
\eqref{canonic_u} reduces to the following one

\begin{equation}\label{tru}
\left\{\begin{array}{ll}
u_1=\left(\frac{u_3bd+u_4c+u_5b}
{u_1b+u_2a+cd}\right)^2,  \ \   u_2=\left(\frac{u_6ad+u_7b+u_8c}{u_1b+u_2a+cd}\right)^2,\\[4mm]
u_3=\left(\frac{c+u_1bd+u_2a}
{u_1b+u_2a+cd}\right)^2, \ \ \ u_4=\left(\frac{u_3b+u_4cd+u_5b}{u_1b+u_2a+cd}\right)^2,\\[4mm]
u_5=\left(\frac{u_6a+u_7bd+u_8c}
{u_1b+u_2a+cd}\right)^2, \ \ u_6=\left(\frac{c+u_1b+u_2ad}{u_1b+u_2a+cd}\right)^2,\\[4mm]
u_7=\left(\frac{u_3b+u_4c+u_5bd}
{u_1b+u_2a+cd}\right)^2, \ \ u_8=\left(\frac{u_6a+u_7b+u_8cd}
{u_1b+u_2a+cd}\right)^2.\\
\end{array}\right.
\end{equation}

Now, let us assumethat $a=b$, and consider the following set:

\begin{equation}
A=\{(u_1,\cdots,u_8): u_1=u_2=u_3=u_5=u_6=u_7,\ u_4=u_8=1\}
\end{equation}
which is invariant w.r.t. \eqref{tru}.
Therefore, we consider \eqref{tru} over $A$, hence the reduced equation has the following form:

\begin{equation}\label{xyzre}
u=\left(\dfrac{u(a+ad)+c}{2ua+cd}\right)^2
\end{equation}

Denoting
\begin{equation}
\a=\frac{4c}{a(1+d)^3}, \ \Upsilon=\frac{d+d^2}{2}, \  X=\frac{u(a+ad)}{c},
\end{equation}
we rewrite \eqref{xyzre} as follows
\begin{eqnarray}\label{UU}
\a X=\left(\dfrac{1+X}{\Upsilon+X}\, \right)^2.
\end{eqnarray}

To solve the last equation, we apply the following well-known fact \cite[Proposition 10.7]{22} and adopt it to our setting.

\begin{lem}\label{lem1}
    \begin{itemize}
        \item[(1).] If $\Upsilon \leq 9$ then  \eqref{UU} has a unique solution.

        \item[(2).] If $\Upsilon > 9$ then there are $\zeta_1$ and $\zeta_2$ such that $0 < \zeta_1 < \zeta_2$, and if $\zeta_1 < \a< \zeta_2$ then \eqref{UU} has three solutions.

        \item[(3).] If $\a=\zeta_1$ and $\a=\zeta_2$ then \eqref{UU} has two solutions. 
     \end{itemize}     
       
        The quantities $\zeta_1$ and $\zeta_2$ are determined from the formula
        \begin{eqnarray}\label{eta}
        \zeta_i=\dfrac{1}{v_i}\left(\dfrac{1+v_i}{\Upsilon+v_i}\, \right)^2, \ i=1,2,
        \end{eqnarray}
where $v_1$ and $v_2$ are solutions to the equation $v^2+(3-\Upsilon)v+\Upsilon =0$.  
\end{lem}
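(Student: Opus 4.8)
The plan is to convert \eqref{UU} into a level–set problem. For $X>0$ the equation $\alpha X=\left(\dfrac{1+X}{\Upsilon+X}\right)^2$ is equivalent to $\alpha=g(X)$, where
$$
g(X):=\frac{1}{X}\left(\frac{1+X}{\Upsilon+X}\right)^2=\frac{(1+X)^2}{X(\Upsilon+X)^2}.
$$
Since $\alpha>0$ and, in the present setting, $X=\frac{u a(1+d)}{c}>0$, the solutions of \eqref{UU} are exactly the abscissas where the graph of $g$ on $(0,\infty)$ meets the horizontal line $y=\alpha$. The first step is to record the boundary behaviour: $g(X)\to+\infty$ as $X\to0^{+}$ and $g(X)\to0$ as $X\to+\infty$, so $g$ decays from $+\infty$ to $0$.

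The key step is the sign analysis of $g'$. From $\log g(X)=2\log(1+X)-\log X-2\log(\Upsilon+X)$ one obtains $\frac{g'(X)}{g(X)}=\frac{2}{1+X}-\frac{1}{X}-\frac{2}{\Upsilon+X}$, and clearing the positive denominators shows that on $(0,\infty)$ the sign of $g'(X)$ coincides with the sign of $-\bigl(X^2+(3-\Upsilon)X+\Upsilon\bigr)$. Hence the critical points of $g$ in $(0,\infty)$ are precisely the positive roots of $v^2+(3-\Upsilon)v+\Upsilon=0$, whose discriminant is $(\Upsilon-1)(\Upsilon-9)$, with sum of roots $\Upsilon-3$ and product $\Upsilon$. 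If $1<\Upsilon<9$ this quadratic has no real root; if $\Upsilon\le1$ both its roots are negative (product $\Upsilon>0$, sum $\Upsilon-3<0$); and at $\Upsilon=9$ it has the double root $v=3$, where $g'$ does not change sign. In every case $g'<0$ on $(0,\infty)$ apart from at most one point, so $g$ is strictly decreasing; together with the boundary behaviour this shows $\alpha=g(X)$ has a unique solution for each $\alpha>0$, which is item~(1).

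For $\Upsilon>9$ the quadratic has two distinct positive roots $v_1<v_2$, and $g'<0$ on $(0,v_1)$, $g'>0$ on $(v_1,v_2)$, $g'<0$ on $(v_2,\infty)$; thus $g$ has a local minimum at $v_1$ and a local maximum at $v_2$, with $\zeta_1:=g(v_1)<\zeta_2:=g(v_2)$ because $g$ increases strictly from $\zeta_1$ to $\zeta_2$ on $(v_1,v_2)$, and these are exactly the quantities appearing in \eqref{eta}. Counting intersections of $y=\alpha$ with this profile (decreasing, then increasing, then decreasing) then gives: three solutions when $\zeta_1<\alpha<\zeta_2$, one on each of $(0,v_1)$, $(v_1,v_2)$, $(v_2,\infty)$; exactly two when $\alpha\in\{\zeta_1,\zeta_2\}$, namely a transversal crossing together with a tangency at $v_1$ (if $\alpha=\zeta_1$) or at $v_2$ (if $\alpha=\zeta_2$); and exactly one when $\alpha<\zeta_1$ or $\alpha>\zeta_2$. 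This establishes items~(2) and~(3).

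I do not anticipate a genuine obstacle here, since this is \cite[Proposition 10.7]{22} specialised to the present parameters; the only points requiring care are verifying that the relevant roots $v_1,v_2$ actually lie in the domain $X>0$ (which is controlled by the sign of $\Upsilon-3$ via the sum of roots) and handling the degenerate critical point at the threshold $\Upsilon=9$, where one must check that $g$ remains strictly monotone and no second solution appears.
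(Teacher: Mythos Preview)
Your argument is correct and is essentially the standard proof of Preston's Proposition~10.7; the reduction to the level-set $\alpha=g(X)$ with $g(X)=\dfrac{(1+X)^2}{X(\Upsilon+X)^2}$ and the sign analysis of $g'$ via the quadratic $v^2+(3-\Upsilon)v+\Upsilon$ are exactly right. The paper does not give its own proof of this lemma---it simply invokes \cite[Proposition~10.7]{22} and adapts the statement to the present parameters---so your write-up supplies precisely the argument the citation points to.
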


Now the condition $\Upsilon > 9$ is reduced to 
$$
d^2+d-18>0
$$
which with the positivity of $d$ implies 
$$
d>\frac{\sqrt{73}-1}{2}.
$$
Hence, the last condition is a necessary condition for the existence of three solutions of \eqref{UU}. 

The condition $\zeta_1 < \a< \zeta_2$ ensures the existence of the translation-invariant solutions of \eqref{tru}, which implies the occurrence of the phase transition for the considered model.
Therefore, let us rewrite the last condition in terms of $\Upsilon$. 
One can calculate that
$$
v_{1,2}=\frac{1}{2}\big(\Upsilon-3\pm\sqrt{(\Upsilon-9)(\Upsilon-1)}\big).
$$
Then $\zeta_{1,2}$ has the following form
$$
\zeta_{1,2}=\frac{2(\Upsilon-5\pm\sqrt{(\Upsilon-9)(\Upsilon-1)})}{(\Upsilon-3\pm\sqrt{(\Upsilon-9)(\Upsilon-1)})(5\Upsilon-9\pm3\sqrt{(\Upsilon-9)(\Upsilon-1)})}.
$$

Hence, we can formulate the following result.

\begin{thm}\label{phase}
    If  $d>\frac{\sqrt{73}-1}{2}$ and
\begin{eqnarray*}
&&\a>\frac{2(\Upsilon-5-\sqrt{(\Upsilon-9)(\Upsilon-1)})}{(\Upsilon-3-\sqrt{(\Upsilon-9)(\Upsilon-1)})(5\Upsilon-9-3\sqrt{(\Upsilon-9)(\Upsilon-1)})}\\[2mm]
&&\a<\frac{2(\Upsilon-5+\sqrt{(\Upsilon-9)(\Upsilon-1)})}{(\Upsilon-3+\sqrt{(\Upsilon-9)(\Upsilon-1)})(5\Upsilon-9+3\sqrt{(\Upsilon-9)(\Upsilon-1)})}
\end{eqnarray*}
then there exists a phase transition for the $\l$-model with competing Potts interactions on the Cayley tree of order two.
\end{thm}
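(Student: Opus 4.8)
The plan is to produce, under the stated hypotheses, at least two distinct translation-invariant Gibbs measures; by the Remark following Theorem~\ref{ccu} it suffices to exhibit two (in fact we will get three) solutions of the system \eqref{canonic_u}. First I restrict attention to translation-invariant solutions, so that \eqref{canonic_u} reduces to \eqref{tru}, and then, setting $a=b$, I pass to the invariant set $A$ on which the eight equations of \eqref{tru} collapse to the single scalar equation \eqref{xyzre}. A short bookkeeping check confirms that the ansatz $u_1=u_2=u_3=u_5=u_6=u_7=u$, $u_4=u_8=1$ is compatible with all eight equations of \eqref{tru} — every numerator reduces to $ua(1+d)+c$ or to $2ua+cd$, so the equations for $u_4$ and $u_8$ become the identity $1=1$ and the remaining six all become \eqref{xyzre} — hence $A$ is genuinely invariant. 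Under the substitutions $\alpha=\tfrac{4c}{a(1+d)^3}$, $\Upsilon=\tfrac{d+d^2}{2}$, $X=\tfrac{u(a+ad)}{c}$, equation \eqref{xyzre} takes the form \eqref{UU}.

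Next I invoke Lemma~\ref{lem1}(2): if $\Upsilon>9$ and $\zeta_1<\alpha<\zeta_2$, then \eqref{UU} has three (positive) solutions $X$, where $\zeta_{1,2}$ are given by \eqref{eta} in terms of the roots $v_{1,2}$ of $v^2+(3-\Upsilon)v+\Upsilon=0$. The condition $\Upsilon>9$ is, by the definition of $\Upsilon$, the inequality $d^2+d-18>0$, which for $d>0$ is equivalent to $d>\tfrac{\sqrt{73}-1}{2}$; this is the first hypothesis of the theorem. Substituting the explicit roots $v_{1,2}=\tfrac12\big(\Upsilon-3\pm\sqrt{(\Upsilon-9)(\Upsilon-1)}\big)$ into \eqref{eta} and simplifying yields the closed forms of $\zeta_1$ and $\zeta_2$ recorded just before the statement, and the two-sided bound $\zeta_1<\alpha<\zeta_2$ is then precisely the pair of displayed inequalities on $\alpha$.

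Finally, each of the three solutions $X$ corresponds via $u=\tfrac{cX}{a(1+d)}$ to a distinct point of $A$, hence to a distinct translation-invariant solution $\mathbf{u}$ of \eqref{canonic_u}; by Theorem~\ref{ccu} each such $\mathbf{u}$ uniquely determines a Gibbs measure $\mu_{\mathbf{u}}$, and these three measures are pairwise distinct, so a phase transition occurs. I expect the main obstacle to be the algebraic work in the second paragraph: substituting $v_{1,2}$ into \eqref{eta}, clearing the nested radicals to arrive at the stated forms of $\zeta_{1,2}$, and verifying that the three roots $X$ are genuinely distinct and strictly positive (so that they really do give three different admissible $\mathbf{u}$'s). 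This is exactly where $\Upsilon>9$ — which forces $v^2+(3-\Upsilon)v+\Upsilon=0$ to have two distinct positive roots — and the strict inequalities $\zeta_1<\alpha<\zeta_2$ are indispensable.
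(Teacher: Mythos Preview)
Your proposal is correct and follows essentially the same route as the paper: restrict to translation-invariant solutions \eqref{tru}, set $a=b$ and pass to the invariant set $A$ to obtain the scalar equation \eqref{xyzre}, change variables to reach \eqref{UU}, apply Lemma~\ref{lem1}(2), and translate the conditions $\Upsilon>9$ and $\zeta_1<\alpha<\zeta_2$ into the stated inequalities on $d$ and $\alpha$. Your write-up is in fact more careful than the paper's, which does not give a separate proof of Theorem~\ref{phase} but simply records it as the outcome of the preceding computations; your explicit verification that $A$ is invariant under \eqref{tru} and your remark that each $X$ yields a distinct admissible $\mathbf{u}$ fill in steps the paper leaves implicit.
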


\medskip

\smallskip

\end{document}